\newtheorem{assumption}{Assumption}
\newtheorem{theorembody}{Theorem}
\newtheorem{theorem}{Theorem}
\newtheorem{proposition}{Proposition}
\def\Snospace~{\S{}}
\newcommand{\sys}{Lazarus\xspace}
\newcommand{\VSPACE}[1]{\vspace{#1}}
\renewcommand{\VSPACE}[1]{}
\begin{document}
\date{}

\title[]{\sys: Resilient and Elastic Training of Mixture-of-Experts Models}

\author{
\rm{Yongji Wu$^{\text{1}, *}$\enskip
    Wenjie Qu$^{\text{2}, *}$ \enskip
    Xueshen Liu$^{\text{3}, *}$ \enskip
    Tianyang Tao$^{\text{2}}$ \enskip
    Yifan Qiao$^{\text{1}}$ \enskip \\
\rm{Zhuang Wang$^{\text{4}}$ \enskip
    Wei Bai$^{\text{5}}$ \enskip} 
    Yuan Tian$^{\text{6}}$ \enskip
    Jiaheng Zhang$^{\text{2}}$ \enskip
    Z. Morley Mao$^{\text{3}}$ \enskip
    Matthew Lentz$^{\text{7}}$ \enskip
    Danyang Zhuo$^{\text{7}}$ \enskip
    Ion Stoica$^{\text{1}}$ \enskip}\\ 
{$^{\text{1}}$UC Berkeley\enskip$^{\text{2}}$NUS\enskip$^{\text{3}}$UMich$\enskip^{\text{4}}$AWS\enskip$^{\text{5}}$NVIDIA\enskip$^{\text{6}}$UCLA\enskip$^{\text{7}}$Duke}
}
\renewcommand{\shortauthors}{}

\begin{abstract}
Sparsely-activated Mixture-of-Experts (MoE) architecture has increasingly been adopted to further scale large language models (LLMs). However, frequent failures still pose significant challenges as training scales. The cost of even a single failure is significant, as all GPUs need to idle wait until the failure is resolved, potentially losing considerable training progress as training has to restart from checkpoints. This problem is exacerbated by the growing use of spot instances on public clouds for model training, which despite offering substantial cost savings, introduce frequent preemptions—essentially failures that  regularly occur throughout the training process. Existing solutions for efficient fault-tolerant training either lack elasticity or rely on building resiliency into pipeline parallelism, which cannot be applied to MoE models due to the expert parallelism strategy adopted by the MoE architecture. 

We present \sys, a system for resilient and elastic training of MoE models. \sys adaptively allocates expert replicas to address the inherent imbalance in expert workload and speeds up training, while a provably optimal expert placement algorithm is developed to maximize the probability of recovery upon failures. Through adaptive expert placement and a flexible token dispatcher, \sys can also fully utilize all available nodes after failures, leaving no GPU idle. Our evaluation shows that \sys outperforms existing MoE training systems by up to 5.7x under frequent node failures and 3.4x on a real spot instance trace. 
\end{abstract}

\maketitle

{\let\thefootnote\relax\footnote{
$^*$Yongji Wu, Wenjie Qu and Xueshen Liu are co-first authors of this work.
}}

\section{Introduction}

The advent of large language models (LLMs) has demonstrated ever-increasing capabilities with the rapid growth in both model sizes and training datasets. Recently, the sparsely-activated Mixture-of-Experts (MoE) models have been increasingly adopted by the community to further scale model parameters~\cite{liu2024deepseek,llama4,qwen3,jiang2024mixtral}. Training state-of-the-art MoE models is becoming resource-intensive. For instance, it takes over 32K H100 GPUs to train the 2T Llama 4 model~\cite{llama4}.

The likelihood and frequency of failures significantly increase as the scale and duration of training increase. Meta projects that the mean time to failure (MTTF) is as little as 14 minutes for a cluster with 128K GPUs~\cite{kokolis2025revisiting}. Even a single failure is costly, as all GPUs are idle until the failure is resolved and failed nodes are replaced. It is reported that failures can slow the training progress by up to 43\%~\cite{maeng2021understanding}. In addition, most cloud providers offer preemptible (spot) instances that can be leveraged for training LLMs with minimized monetary cost~\cite{duan2024parcae,thorpe2023bamboo}, as they offer cost savings of up to 90\% compared to on-demand instances. Preemptions, which are essentially failures, can happen as frequently as every 5\textasciitilde10 minutes~\cite{thorpe2023bamboo}.

Existing solutions for LLM training with quick failure recovery can be categorized into two classes: checkpointing optimizations or pipeline-parallelism based elastic training. The first line of work~\cite{wang2023gemini,wang2023reliable,cai2025moc} reduces checkpointing overhead by either using CPU memory of neighboring nodes to periodically checkpoint model states, or relying on stale states which compromises correctness~\cite{cai2025moc}.
They also lack elasticity and have to wait for replacement nodes of the failed ones to recover from failure and continue training, which may not be available for hours to days until failed nodes are repaired~\cite{he2023unicron}. Especially for training on spot instances, such new node availability cannot be taken for granted.

The second line of works builds resiliency and elasticity into pipeline parallelism by taking advantage of its configurability in stages-nodes mapping~\cite{thorpe2023bamboo,jang2023oobleck,duan2024parcae}. In particular, they can continue training upon failures without requiring additional nodes. However, these approaches do not apply to MoE models, as the distributed training of MoE models depends on a different parallelism strategy: expert parallelism (EP)~\cite{lepikhin2020gshard}.
EP distributes experts across multiple GPUs (and nodes) and uses all-to-all communication to dispatch input tokens to GPUs with corresponding experts.

In this paper, we present \sys, a system for resilient and elastic training of MoE models. \sys achieves high-throughput training accompanied by a high failure recovery probability without restarting from checkpoints. Upon failures, \sys quickly reconfigures the training job and utilizes all remaining GPUs (regardless of how many nodes fail).

Our insight is that adaptively adjusting the number of replicas (GPUs) assigned to each expert and their placement enables elastic training while improving resiliency against failures.
Due to the dynamic nature of its architecture, MoE models suffer from dynamic and imbalanced workload~\cite{hwang2023tutel,zhai2023smartmoe,nie2023flexmoe}. Tokens are routed to experts based on the decisions of trainable gate networks. Some experts have more tokens routed to than others. Traditional EP partitions experts into equal-sized chunks, and each is assigned to the same number of GPUs. In contrast, \sys allocates more replicas to popular experts and flexibly assigns them using all available GPUs. Such flexible expert allocation not only results in performance boosts but also leads to better elasticity. As long as a single replica for each expert remains available, training can continue to progress with all remaining nodes utilized; traditional EP requires using a multiple of EP size GPUs, which can induce significant performance degradation even for minor failures.

There are three key challenges \sys must address. 
First, we need an expert allocation and placement algorithm that takes account of the imbalanced workload, to speed up expert computation while ensuring a high probability of successful recovery.
Second, with our asymmetrical expert placements in the cluster, how do we efficiently dispatch tokens to GPUs with corresponding experts and balance their loads? 
Third, how do we quickly re-instantiate lost expert replicas and efficiently migrate the cluster to a new placement plan in response to failures?

\begin{figure}
\centering
\includegraphics[width=0.95\linewidth]{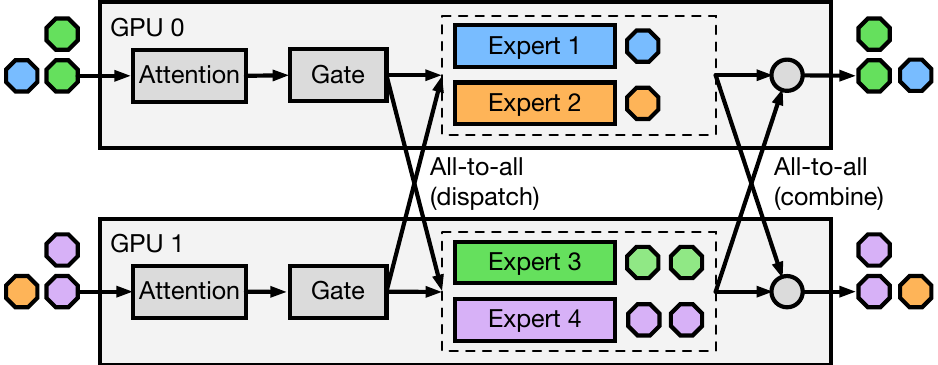}
\caption{MoE architecture utilizes expert parallelism for distributed training, yet it also suffers from imbalanced workload due to the dynamic nature of gate networks.}
\label{fig:intro}
\VSPACE{-4mm}
\end{figure}

To address these challenges, we propose a strategy for allocating expert replicas based on the load distribution, while maintaining a fault-tolerant threshold to guarantee failure recovery when a small number of nodes fail. We design a provably optimal algorithm for placing these replicas to maximize the recovery probabilities under arbitrary node failures. We develop a CUDA kernel that dispatches tokens in parallel with a flexible all-to-all that minimizes inter-GPU communication. During migration, \sys utilizes a greedy strategy to reduce state transfers for efficient reconfiguration.

We implement \sys in PyTorch. We evaluate \sys across MoE models of different scales with both controlled failures and spot instance traces. Our results show that \sys outperforms checkpointing-based DeepSpeed MoE~\cite{rajbhandari2022deepspeed}, a widely adopted system for training MoE models, by up to 2.3x under infrequent failures (40~mins MTBF) and 5.7x under a high failure frequency (5~mins MTBF), while our evaluation on a real spot instance trace demonstrates a performance improvement of 3.4x.

In this paper, we make the following contributions:
\begin{itemize}[leftmargin=*]
\item To the best of our knowledge, \sys is the first system for resilient and elastic training of MoE models that enables both quick recovery from failures and full utilization of all available  (remaining) GPUs.

\item We design a provably optimal algorithm for determining expert placement that maximizes recovery probability in response to uniformly random node failures.

\item We implement and evaluate \sys with MoE models of different scales under a variety of scenarios.

\end{itemize}

\section{Background and Motivation}
\subsection{MoE Models and Expert Parallelism}

\begin{figure}
    \begin{subfigure}{0.48\linewidth}
        \centering
        \includegraphics[width=\linewidth]{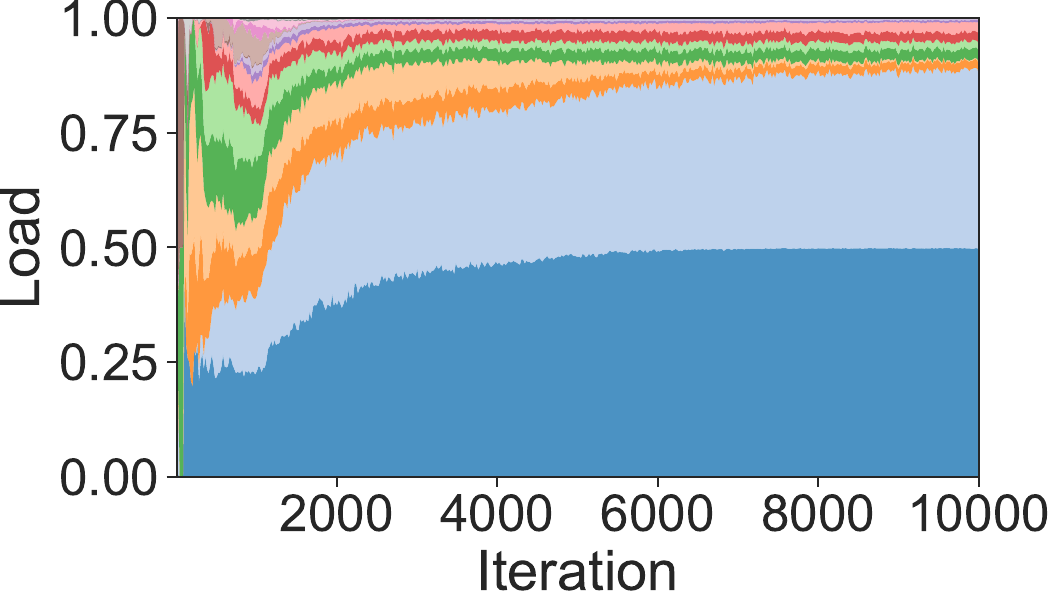}
        \caption{Layer 1}
    \end{subfigure} \hfil
    \begin{subfigure}{0.48\linewidth}
    \centering
      \includegraphics[width=\linewidth]{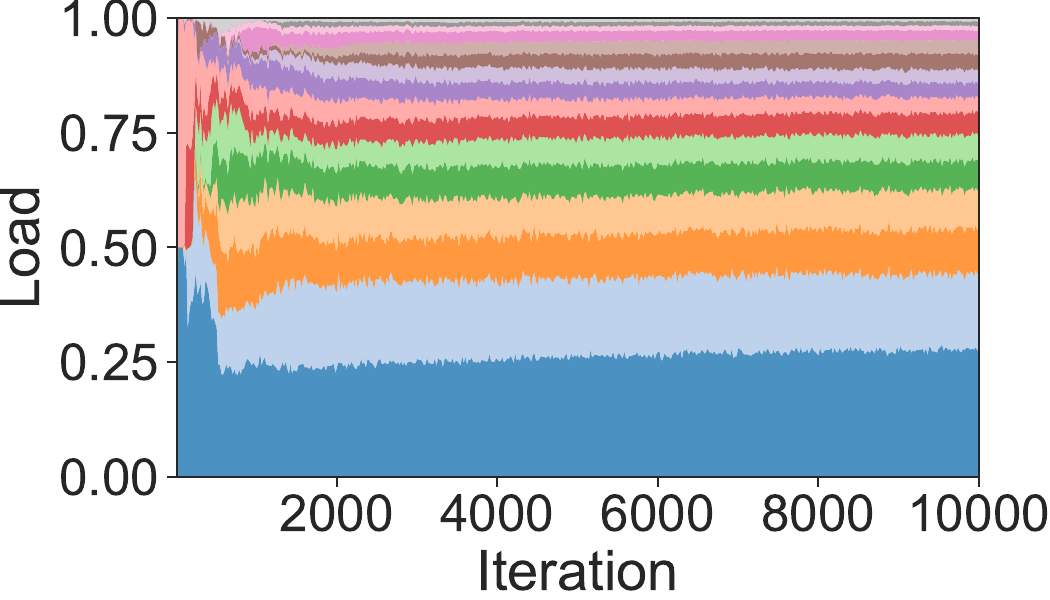}
        \caption{Layer 8}
    \end{subfigure}
\caption{Expert loads on a 16 experts model (GPT-L in \autoref{sec:exp-setup}). The distribution varies during training and across layers.}
\label{fig:background_expert_load}
\VSPACE{-4mm}
\end{figure}

Mixture-of-Experts architecture has been recently applied to scale LLMs due to its high cost-efficiency, which replaces the dense feed-forward network (FFN) in a transformer block. MoE employs multiple parallel FFNs called experts. In each MoE layer, a trainable gate network routes
each token to only the top-$k$ experts. As experts are sparsely activated, MoE enables scaling model parameters without an increase of the per-token computational cost. 

As the size of an MoE model is dominated by the weights of the experts, expert parallelism (EP)~\cite{lepikhin2020gshard} has been proposed and has become the de facto approach to train large-scale MoE models. In expert parallel training, the experts of each layer are split into equal-sized chunks and allocated across multiple GPUs similar to tensor parallelism, while the input samples are distributed along the batch dimension similar to data parallelism. The number of GPUs required to split the experts is called the EP size and such a set of GPUs forms an EP group. For instance, in \autoref{fig:intro}, there are 4 experts and each GPU accommodates 2 experts, therefore it has a EP size of 2. EP can be used in conjunction with other types of parallelism like data and tensor parallelism. 

As each GPU in an EP group only holds a subset of experts, all-to-all communication is used to dispatch the input tokens to the GPUs with corresponding experts that the gate network routes to. The computation of the experts are performed on the owning GPUs and the results are sent back to the original GPUs with a second all-to-all (combine).

The most distinctive feature of expert parallelism is the dynamic nature of gate networks. The distribution of tokens routed to each expert can be highly unbalanced depends on the input data. We plot the evolution of expert loads from a training trace~\cite{zhai2023smartmoe} in \autoref{fig:background_expert_load}. We observe that the load of experts is highly skewed, with up to 87\% tokens routed to 2 most popular experts. The load distribution also varies at different layers and training iterations.

The skewed expert loads in MoE training directly translates to imbalance in expert computation. GPUs holding more popular experts takes much longer time to compute due to large amount of tokens dispatched to them, while other GPUs are idling. Previous works~\cite{hwang2023tutel,nie2023flexmoe,zhai2023smartmoe,he2022fastermoe} addresses this challenge by dynamically adjusting parallelism strategies on a cluster with a fixed number of GPUs. They do not apply in an elastic environment with changing device membership.

In addition to the problem of imbalanced workload, traditional EP also utilizes a multiple of EP size GPUs, which may leave some of GPUs idle upon a failure.
The waste of GPUs only grows with increasing number of experts, as more GPUs are needed for a single EP group, i.e., larger EP size. 

\VSPACE{-3mm}
\subsection{Fault-Tolerant and Elastic Training}

A growing research effort has been made in resilient training in recent years, due to the fact that both the frequencies and costs of failures increase as the scale and duration of training increase. It is reported during the two-months training of OPT 175B, around 100+ failures were encountered~\cite{zhang2022opt}, wasting over 178,000 GPU hours. The cost of even one failure is significant, as all the GPUs must wait idle until the failure is resolved and failed nodes are repaired, which could take hours to days depending on the nature of failures~\cite{he2023unicron}. To minimize the GPU idling and the resulting economic loss, a training system must be designed with resiliency in terms of it can quickly recover from failures, and elasticity in terms that it can efficiently utilize currently available GPU resources to continue training. Such systems also enable one to leverage preemptible instances on public clouds to train LLMs with  significant cost savings~\cite{thorpe2023bamboo,duan2024parcae}.

Existing training solutions with quick failure recovery capability can be divided into two categories: checkpointing optimizations and elastic training using pipeline parallelism.  Checkpointing based solutions focus on reducing the overhead in both saving checkpoints and restarting~\cite{wang2023reliable,wang2023gemini,wan2024bytecheckpoint,cai2025moc,gandhi2024moetion}. In particular, in-memory based checkpointing~\cite{wang2023reliable,wang2023gemini} has been proposed to store model states in the CPU memory of other nodes in addition to persistent storage, while MoC-System~\cite{cai2025moc}, an MoE specific checkpointing solution, compromises correctness by using stale states. However, they lack elasticity as they have to wait until replacements of failed nodes are available to resume training.

To support both elastic and fault tolerant training without the overhead of checkpointing and restarting, recent attempts~\cite{thorpe2023bamboo,jang2023oobleck,duan2024parcae} have been made in building resiliency into pipeline parallelism due to its configurability. 
However, they fail to apply to MoE models. 
As the model states of a single MoE layer can exceed the GPU memory capacity, they are generally trained in conjunction with expert parallelism, requiring resiliency for expert states distributed across GPUs.

In summary, existing systems for fault-tolerant and elastic training fail to adapt to MoE models. \sys targets MoE training, utilizing adaptive expert allocation and placement to address expert parallelism's inelastic nature while handling the imbalanced expert load distribution caused by the dynamic gate networks.
\VSPACE{-2mm}
\section{System Overview}
\begin{figure}
\centering
\includegraphics[width=0.85\linewidth]{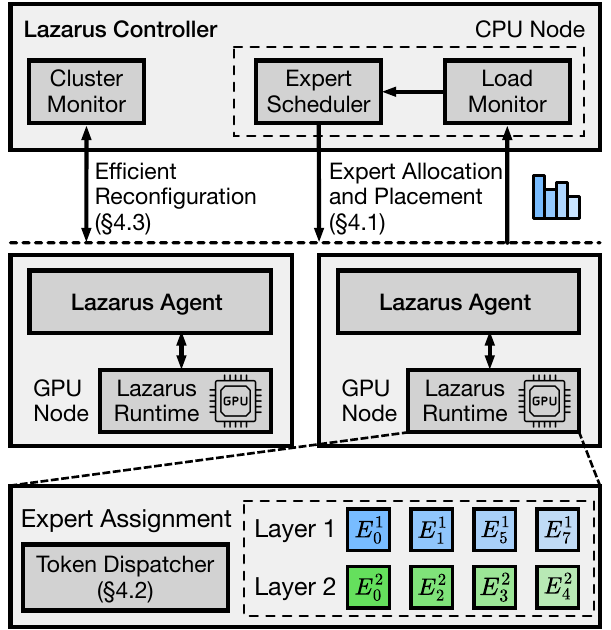}
\caption{System architecture of \sys. }
\label{fig:overview}
\VSPACE{-4mm}
\end{figure}

\sys is a resilient and elastic system for training MoE models. \sys speeds-up training by adaptively allocating expert replicas based on the dynamic expert load distribution using all available GPUs, while our fault-tolerant expert placement strategy maximizes \sys's recovery probability even under simultaneous failures of multiple nodes.

The architecture of \sys is shown in \autoref{fig:overview}. \sys consists of three main components: a centralized controller that manages a GPU cluster, an agent process on each GPU node that spins up worker processes with \sys runtime. The controller runs persistently on a (CPU-only) node and it communicates  with each \sys agent, monitors the cluster and detects node failures and replenishment. A scheduler in the controller allocates expert replicas and computes a fault-tolerant placement plan for all GPU nodes that maximizes the recovery probability (\autoref{sec:expert-placement}). The placement is sent to each \sys agent to configure the workers. Based on the placement plan, \sys runtime fills up each layer with corresponding experts assigned to it. Unlike vanilla expert parallelism where all experts are equally replicated, \sys assigns more replicas and more GPUs to the heavily loaded experts. As the expert placement becomes asymmetric, \sys runtime also contains a CUDA kernel based dispatcher (\autoref{sec:token-dispatch}) to efficiently dispatch tokens to GPUs with corresponding experts and balance their loads. 

Upon detection of failures, the controller recomputes an expert placement plan using all remaining nodes and minimizes the number of replicas migrated. Once \sys runtime receives the new plan relayed by \sys agent, it dynamically reconfigures the parallelism setups and retrieves missing model states from other nodes (\autoref{sec:effcient-reconfig}). To handle dynamics in workloads, \sys agent also periodically collects the expert load distribution (routing history of gate networks) from \sys runtime. The load distribution is communicated to the load monitor on the controller, which then rebalances the expert allocation and placement.

\VSPACE{-2mm}
\section{Design}
\subsection{Adaptive Expert Allocation and Placement}
\label{sec:expert-placement}
\sys considers that each GPU can hold a certain number of replicas limited by its GPU memory, similar to traditional EP.
Through assigning more replicas to popular experts, \sys can speed up training by giving them more computation resources.
Note that we allow multiple replicas of the same expert assigned to a single GPU, which indicates more tokens (of the specific expert) can be processed by that GPU compared to assigning a single replica. 

Yet, there is an inherent trade-off between speeding up computation and fault resiliency. On the one hand, if a less popular expert is assigned with only a single replica, then as long as the GPU (node) holding that replica fails, \sys cannot recover due to the loss of expert state.
On the other hand, a balanced allocation of replicas improves fault resiliency; however, it degenerates to traditional EP and defeats the goal of addressing expert load imbalance.

Moreover, given an expert replica allocation, the placement of these replicas determines the probability of failure recovery. For instance, if all replicas of an expert are all placed on GPUs in a single node, the loss of that node would lead to an unrecoverable failure. Hence, when allocating and placing expert replicas, \sys should not only consider the imbalanced workload to speed up computation, but also take account of the impact on fault tolerance.

\begin{figure}
\centering
\includegraphics[width=.85\linewidth]{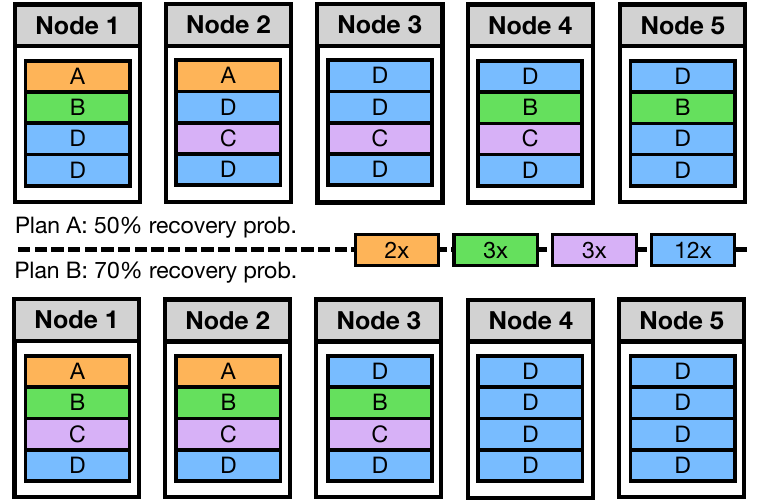}
\caption{Fault resiliency depends on how expert replicas are placed. With the same replica allocation of 4 experts and 4 replica slots per node, placement plan A and B differ in recovery probability under 3 node failures.}
\label{fig:recover_exp1}
\VSPACE{-4mm}
\end{figure}

We divide the problem into two phases and separately consider allocation and placement. In the first phase, we design an expert allocation strategy that balances between the workload's expert distribution and fault tolerance. In the second phase, we design an expert placement algorithm which is theoretically optimal, maximizing the recovery probability given a fixed expert allocation. In this way, our allocation and placement plan strikes a balance between the two goals.

\noindent
\textbf{Expert allocation.} For ease of presentation, we show how expert replicas are allocated and placed on each node. If a node has multiple GPUs, \sys simply distributes the assigned replicas among all GPUs on that node, as we consider failures at the node level.
We denote the number of nodes as $N$, the number of experts as $E$, the number of replicas each node can hold as $c$, the total number of tokens routed to expert $e$ as $t_e$, the number of replicas assigned for expert $e$ as $r_e$. To speed-up computation, we want the ratio of replicas assigned to each expert match the ratio of tokens routed to that expert, namely $\frac{r_e}{\sum_{e^{\prime}}r_{e^{\prime}}} \approx \frac{t_e}{\sum_{e^{\prime}} t_{e^{\prime}}}$. 
Furthermore, for better fault tolerance, we define a fault-tolerant threshold $f$, where \sys guarantees a 100\% probability of failure recovery as long as fewer than $f$ nodes fail simultaneously. Hence, each expert is assigned at least $f$ replicas.

Assume that the experts are sorted by the number of routed tokens ($t_e$) in ascending order. We iteratively compute the number of replicas $r_e$ assigned to each expert $e$ as follows:
\begin{align}
    r_e=\mathsf{max}\{\lfloor \frac{t_{e}}{\sum_{e^\prime=e}^E t_{e^\prime} }  \cdot( N \cdot c -\sum_{e^\prime=1}^{e-1} r_{e^{\prime}}) \rfloor, f \}
\end{align}
Our assignment strategy ensures that $\sum_e r_e=N\cdot c, r_e\ge f, r_e\ge r_{e-1}$. $\frac{r_e}{\sum_{e^{\prime}} r_{e^{\prime}}} \approx \frac{t_e}{\sum_{e^{\prime}} t_{e^{\prime}}}$ is also satisfied in most cases for training speed-up under the imbalanced workload.
As $r_e \ge f$, \sys guarantees recovery under failures of a small number~$(<f)$ of nodes. 

\noindent
\textbf{Expert placement.} However, when the number of failed nodes $\ge f$,  the probability of failure recovery differs between different placement plans.
Figure~\ref{fig:recover_exp1} shows an example of 4 experts and 5 nodes. Assume that 3 nodes fail  simultaneously. In plan A, the probability of recovery is $\frac{5}{10}$, as recovery is possible only if the alive nodes are $(1,2),(1,3),(1,4),(2,4),(2,5)$, while there are 10 possible cases. In plan B, however, the probability of recovery is much higher at $\frac{7}{10}$.

\begin{figure*}
\centering
\includegraphics[width=.9\linewidth]{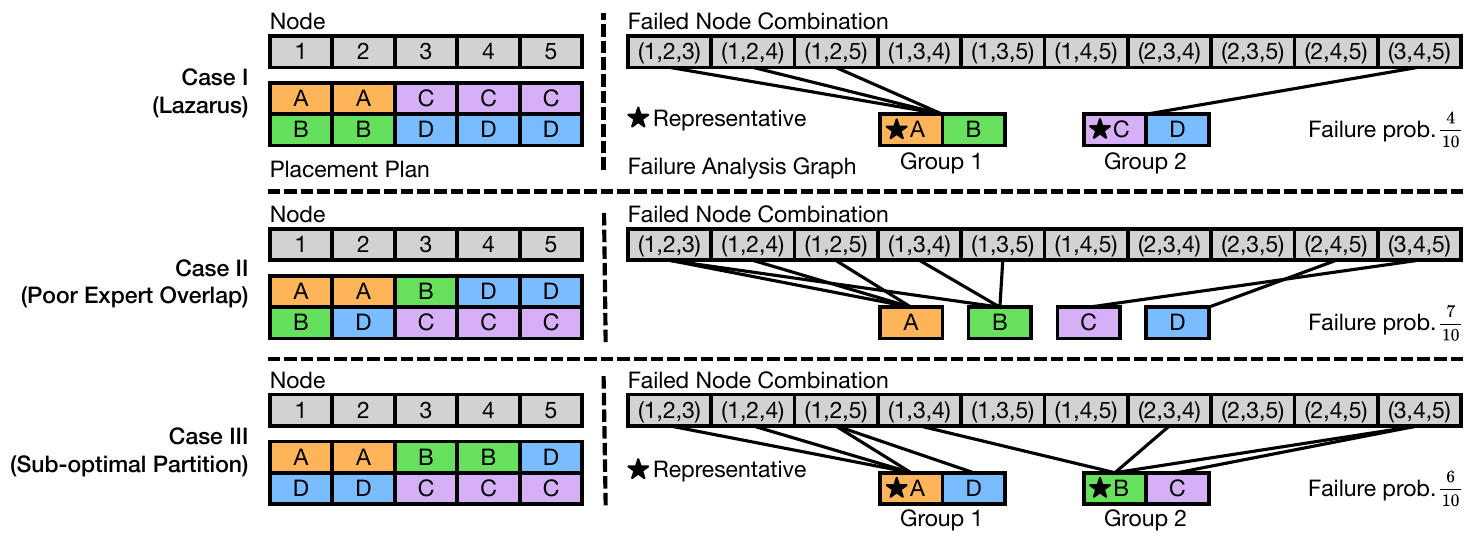}
\caption{\sys minimizes the failure probability by minimizing the number of vertices representing node failures that have incident edges. Here we consider 3 node failures. Comparing Case I and II, when expert overlap on nodes is not maximized, there are more unique failure patterns. Comparing Case I and III, swapping any expert also leads to more failure patterns.}
\label{fig:recover_exp2}
\VSPACE{-4mm}
\end{figure*}

\noindent
\textbf{Placement solution for an easier case.} We first consider a simpler case where $E\le c$, which we can easily derive an optimal placement strategy inspired by the previous example.
The strategy is that for the first $\mathsf{min}(r_1,N)$ nodes we place the first (least popular) expert, for the first $\mathsf{min}(r_2,N)$ nodes we place the second expert, and so on. For the vacant positions, we 
uniformly place the experts that still have replicas left. In this way, denote the set of nodes that have the $e$-th expert as $S_e$. This strategy satisfies $S_1 \subset S_2\cdots \subset S_E$. Thus, the recovery probability is equal to the probability that the first expert belongs to an alive node (i.e., any of the first $r_1$ nodes are alive). Furthermore, the first expert belonging to an alive node is a necessary condition of failure recovery. Thus, for any placement plan, the recovery probability is upper bounded by the probability of any of the first $r_1$ nodes is alive. Since there are only $r_1$ replicas for the first expert, it can span across at most $r_1$ different nodes. 
Therefore, in the case of $E \le c$, this placement strategy achieves the upper bound of the recovery probability of all placement plans, guaranteeing its optimality. 

The above strategy relies on a core principle: maximize the nodes overlapped between the experts.
Take the first and second expert as an example; if we overlap all the replicas of the first experts with the second expert's replicas on the same nodes, the two experts' states will be lost only when all of the first expert's replicas are lost. However, if some of the first expert's replicas are not overlapped with the second expert's, the two experts cannot be recovered when either all of the first's replicas are lost or all of the second's are lost. 

\noindent
\textbf{Placement solution for the more difficult case.} 
When $E>c$, the optimal strategy becomes more complicated. The previously introduced maximum overlap principle cannot be directly applied due to 
the infeasibility of overlapping all experts when $E>c$. To address this issue, we partition both the experts and nodes into $\lceil \frac{E}{c} \rceil$ groups. We also modify the second principle into maximizing the overlap of experts in each group. Furthermore, we constrain the expert partitions to be consecutive, i.e., $1,\cdots,c$ are in the first group, $c+1,\cdots, 2c$ forms the second group, and so on. 
For the nodes, we divide the first $\min\{N,\sum_{i=1}^{\lceil \frac{E}{c} \rceil} r_{c*(i-1)+1} \}$ nodes into $\lceil\frac{E}{c} \rceil$ groups. The first group has $r_1$ nodes, the second has $r_{c+1}$ nodes, and so on, while the last group has $\min\{N-\sum_{i=1}^{\lceil \frac{E}{c} \rceil-1} r_{c*(i-1)+1}, r_{c*(\lceil \frac{E}{c} \rceil-1)+1}\}$ nodes. For group $i$ in the first $\lceil \frac{E}{c} \rceil-1$ groups, each node contains one replica of expert $c*(i-1)+1,\ldots, c*i$. For the last group, each node contains one replica of expert $c*(\lceil \frac{E}{c} \rceil-1)+1,\ldots, E$. For the vacant slots, we uniformly place the experts that still have replicas left to place.  Our strategy satisfies $S_{c*(i-1)+1} \subset S_{c*(i-1)+2}\cdots \subset S_{c*i}$ for different $i$, which intuitively maximizes the node overlap of experts in each group. The recovery of the experts in the $i-$th group hence only requires one node in $S_{c*(i-1)+1}$ to be alive, where we define the expert $c*(i-1)+1$ as the representative of group $i$. The complete recovery is equivalent to that one replica of each group's representative still remains. Our maximum rank overlap (MRO) placement plan maximizes recovery probability under uniformly random node failures for any given replica number $r$. Concretely, we have Theorem~\ref{theorem:mro_optimality}. Its proof can be found in the supplementary material.

\begin{theorembody}
For any MRO plan $T$ and $R$, given the number of replicas $r_e$ for each expert $e$, $T$ maximizes the recovery probability $\mathsf{Pr}(\bigcup_{a \in A}Col_a=[E])$, where $[E]$ is the set of experts, $Col_a$ is the set of replicas assigned to node $a$, $A$ is a uniformly sampled set of $R$ nodes that remain alive. 
\label{theorem:mro_optimality}
\end{theorembody}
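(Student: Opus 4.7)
My plan is to reduce the recovery event for MRO to a clean group-wise condition, then dominate any competing plan via the two-way case analysis hinted at by Figure~\ref{fig:recover_exp2}.

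First I would establish a reduction for MRO. Under any MRO plan $T$, the nested containments $S_{c(i-1)+1}\subset S_{c(i-1)+2}\subset\cdots\subset S_{ci}$ within each group $i$ imply that all $c$ experts of group $i$ are recovered if and only if the alive set $A$ meets the innermost set $S_{c(i-1)+1}$, since any live node there witnesses every other expert of the group. Because MRO places the representative sets of distinct groups on disjoint nodes (whenever $N\ge\sum_i r_{c(i-1)+1}$), the overall recovery probability equals $\mathsf{Pr}\bigl(\forall i,\ A\cap S_{c(i-1)+1}\ne\emptyset\bigr)$, a quantity determined solely by the sizes $m_i:=r_{c(i-1)+1}$.

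Next I would classify an arbitrary competing plan $T'$ as either Case III, in which two representatives of distinct groups share a node, or Case II, in which the representatives of different groups occupy disjoint node sets but within some group the placement sets $S'_j$ are not nested. For Case III, I would construct an uncrossing swap exchanging one of the shared-representative replicas with a non-representative replica elsewhere; the swap preserves the per-node capacity $c$ and per-expert replica budget $r_e$, and weakly improves recovery because the cross-group overlap had squandered capacity that could otherwise broaden the coverage of non-representative experts. For Case II, with cross-group disjointness in force, group-$i$ recovery under $T'$ requires $A$ to intersect every $S'_{c(i-1)+j}$ for $j=1,\ldots,c$, while under MRO it only requires $A$ to meet $S_{c(i-1)+1}$ — a single set of the maximum possible size $m_i\ge|S'_{c(i-1)+1}|$ — so the MRO recovery event strictly contains the Case II one. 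Iterating these two repair operations transforms $T'$ into an MRO plan without decreasing the recovery probability, which establishes optimality.

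The main obstacle I anticipate is formalizing the Case III uncrossing as a sequence of local swaps that simultaneously respect all structural constraints and individually yield a monotone improvement in $\mathsf{Pr}(\text{recover})$. I expect this needs a coupling argument over the random alive set $A$ that pairs each outcome in which the original plan recovers with an outcome in which the swapped plan also recovers, and exhibits at least one outcome distinguishing the two. Some additional bookkeeping will be required for the degenerate case where a node already holds multiple replicas of the same expert, and for the boundary regime $N<\sum_i m_i$ in which the last group's representative set is truncated and the uniform vacant-slot filling of MRO contributes to the comparison; both fit within the same swap-and-couple framework.
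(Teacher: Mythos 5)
Your reduction of the MRO recovery event to ``$A$ hits every representative set $S_{c(i-1)+1}$'' is correct and is exactly the identity the paper relies on in \autoref{app:proof}. The optimality half of your plan, however, has two genuine gaps. First, in your Case II you claim the MRO recovery event \emph{contains} the competing plan's recovery event. It does not, even up to relabeling of nodes: the two plans place their representatives on different node sets, so what you actually have is a per-group comparison of necessary conditions, namely $\mathsf{Pr}\bigl(\bigcap_i\{A\cap S_i\neq\emptyset\}\bigr)$ versus $\mathsf{Pr}\bigl(\bigcap_i\bigcap_j\{A\cap S'_{c(i-1)+j}\neq\emptyset\}\bigr)$. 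Under sampling without replacement these hitting events are correlated across groups, and a per-group probability inequality does not compose into the joint inequality without an additional argument. Second, and more seriously, your Case III rests on the assertion that each uncrossing swap weakly improves the recovery probability. You correctly flag this as the main obstacle, but it is not bookkeeping: a local swap that broadens one expert's coverage can narrow another's, and no monotone coupling is exhibited. That swap-monotonicity lemma is the entire content of the theorem in your approach, and it is left unproved.

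For comparison, the paper avoids both difficulties by inducting on the number of experts to be recovered. It conditions on the position $t$ of the first alive node among those holding the least-replicated expert; this reduces the problem to recovering $M\setminus S_t$ from the remaining $N-t$ nodes with $R-1$ alive samples, which the inductive hypothesis bounds by $P_s(M\setminus S_t,N-t,R-1)$. Its Proposition~\ref{prop:1} then shows this bound is maximized when $S_t$ consists of the $c$ least-replicated experts, which is precisely what MRO co-locates, and summing over $t$ reproduces the recursion satisfied by $P_s$, yielding $P_T\le P_s$ for every plan with equality for MRO. The joint-probability issue you face in Case II is absorbed into the conditioning step, and no swap argument is needed. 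If you wish to keep the exchange route you must prove the swap-monotonicity claim directly, which appears at least as hard as the induction; otherwise I would restructure the argument around the conditioning recursion.
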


The two core insights of our method are to partition experts into different groups based on their popularity, and maximize the overlap across experts in the same group. 
Here, we offer some intuition for analyzing the optimality of our method. 

Denote the failed node number as $k$; there are $\binom n k$ different combinations of node failure cases. The failure analysis of different placement plans would be much more explicit by visualizing a graph formed this way: It is a bipartite graph, where one set has $E$ vertices, each representing an expert. The other set has $\binom{n}{k}$ vertices, each corresponding to a case of node failure. Each placement strategy can be analyzed by constructing such a bipartite graph: for every expert in the placement, an edge is created between that expert and every node failure combination that renders the expert unrecoverable.

Essentially, our strategy achieves optimality by ``putting all the eggs in one basket.'' 
The failure probability of a given placement strategy can be counted by the proportion of the second set of vertices (failure combinations) that have incident edges.
Our strategy minimizes the number of the second sets of vertices that have incident edges. This is achieved by forcing experts to share the same failure combinations as much as possible,  through maximizing the overlap between popular experts.
For each non-representative expert, the set of failure combination vertices it connects to is a subset of the vertices the representative expert connects to.
For different placement plans, the number of edges is the same. By letting more experts fail under a smaller set of failure cases, we reduce the number of failure nodes that have incident edges, thus improving recovery probability.

When the overlap between experts is not maximized, as in case II of \autoref{fig:recover_exp2}, expert $B$ creates unique failure combinations $(1,3,4),(1,3,5)$, expert $D$ creates a unique failure combination $(2,4,5)$. Therefore, the failure probability rises to $\frac{7}{10}$, compared to the optimal of $\frac{4}{10}$.

In addition, case III in \autoref{fig:recover_exp2} offers an example of the optimality of our group partition strategy. If any swap is conducted in two expert groups, where the second expert group has a more popular representative, the total number of failed combinations connected by the two groups' representatives increases, because the new representatives are less popular than the old representatives.

We note that the expert load distribution can be different across layers, hence we compute an expert replica allocation and placement plan independently for each layer. As the load distribution also shifts during training according to the workload, \sys also periodically rebalances the expert allocation and updates the placement plan.

Now, we have developed the strategy to assign expert replicas to each node (GPUs). Next, we explore under such asymmetric replica placements, how \sys efficiently dispatches tokens to GPUs with replicas of routed experts.

\subsection{Flexible Token Dispatcher}
\label{sec:token-dispatch}

\begin{algorithm}[t]
\SetAlgoLined
\SetAlgoNoEnd
\DontPrintSemicolon
\SetKwInOut{Input}{Input}
\SetKwInOut{Output}{Output}
\newcommand\myCommentStyle[1]{\small\textcolor{gray}{#1}}
\SetCommentSty{myCommentStyle}
\SetKwComment{Comment}{// }{}
\newcommand{\forexpert}{$e\leftarrow 0$ \KwTo $E$ in parallel}
\newcommand{\forrank}{$j\leftarrow 0$ \KwTo $N$ in parallel}

\Input{$N$: Number of GPUs; $i$: Current GPU rank; 
$h$: Activation of input tokens to the MoE block;
$R_{e,j}$: Number of replicas for expert $e$ assigned to rank $j$; $T_{e,j}$: Number of tokens routed to expert $e$ at rank $j$;
}
\Output{$h^{\prime}$: Shuffled inputs for all-to-all dispatch; $s_j$: Number of tokens to dispatch to rank $j$ }
\For{\forexpert}{
    $r_e \leftarrow \sum_j R_{e,j}$ \Comment{total \#replicas for expert $e$}
    $t_e \leftarrow \sum_j T_{e,j}$ \Comment{total \#tokens routed to expert $e$}
    $p_e \leftarrow t_e / r_e $ \Comment{\#tokens each replica should handle}  \label{eq:dispatch-tokens_per_replica}
    \For{\forrank}{
        $P_{e,j}\leftarrow c_{e} R_{e,j}$ \Comment{\#tokens rank $j$ can process} \label{eq:dispatch-rank-capacity}
        $P_{e,j}\leftarrow P_{e,j} - \min({P_{e,j}, T_{e,j}})$\; \Comment{rank $j$'s local tokens are prioritized} \label{eq:dispatch-local_priority}
    }
    $D_{e,i} \leftarrow c_{e} R_{e,i} - P_{e,i}$ \Comment{locally processed \#tokens}
    \For{$j\leftarrow 0$ \KwTo $N$, $j \neq i$ in parallel}{
        $D_{e,j} \leftarrow (T_{e,i} - D_{e,i}) \frac{P_{e,j}}{\sum_{k \neq j}P_{e,k}}$\;
        \Comment{distribute remaining tokens to other ranks} \label{eq:dispatch-tokens_to_each_rank}
    }
}
\For{\forrank}{
    $s_j \leftarrow \sum_{e^{\prime}}D_{e^{\prime},j}$ \Comment{\#tokens dispatched to rank $j$} \label{eq:dispatch-send_size_to_each_rank}
    \For{\forexpert}{
        \label{eq:dispatch-reshuffle_start}
        $start \leftarrow  \sum_{0..j-1}s_{j^\prime} + \sum_{0..e-1} D_{e^\prime, j}$\;
        $end \leftarrow \sum_{0..j-1}s_{j^\prime} + \sum_{0..e} D_{e^\prime, j}$\;
        $h^\prime [start..end] \leftarrow$ ($\sum_{j^\prime=0}^{j-1} D_{e,j^\prime}$)-th to ($\sum_{j^\prime=0}^{j} D_{e,j^\prime}$)-th tokens in $h$ that routed to $e$\;
        \label{eq:dispatch-reshuffle-end}
    }
}
\Return{$h^{\prime}, s$}
\caption{\label{alg:token-dispath} Token dispatch algorithm.}
\end{algorithm}

In traditional expert parallelism, each token can be simply dispatched to the GPU that owns the corresponding expert, as there is only a single replica for each expert within a particular EP group. Concretely, an all-to-all is performed with all ranks (GPUs) in the EP group sending and receiving the same number of tokens, which is dynamically set to the maximum number routed to a single expert to prevent token dropping, while unused slots are padded~\cite{rajbhandari2022deepspeed,hwang2023tutel}. 

With \sys's adaptive expert placement, there are varying numbers of replicas assigned for each expert on different sets of GPUs. Therefore, each rank must decide which rank with the routed expert's replica to dispatch a token to. The fact that multiple replicas can be assigned to the same rank (indicating more tokens should be dispatched to it), combined with the difference in expert routing on different ranks, a challenge emerges --- how can we efficiently dispatch the tokens to all GPUs with the routed experts while balancing the load? If tokens are poorly dispatched, some ranks could receive significantly more tokens than others, hence defeating the purpose of our adaptive expert allocation. Moreover, the padded all-to-all is no longer viable in our case where a token can be dispatched to any rank (instead of within an EP group), as padding would dominate the communication.

To address these issues, we design a flexible token dispatcher that efficiently dispatches each token to a particular GPU and balances the number of tokens routed to each GPU. With the dispatch schedule computed, \sys performs a flexible all-to-all without padding. Algorithm~\ref{alg:token-dispath} shows the workflow of the token dispatcher, which is implemented in a CUDA kernel to process all experts and target ranks in parallel. The basic idea behind Algorithm~\ref{alg:token-dispath} is that each replica of an expert should compute around the same number of tokens, and each rank should utilize its local processing ``capacity'' before dispatching remaining tokens to other ranks.

Before computing the dispatch schedule, an all-gather is first performed to collect how many tokens are routed to each expert from all ranks, i.e., $T_{e,j}$. $T_{e,j}$ is collected so that the token dispatcher can better balance the load to each GPU based on the expert routing distribution of all tokens from all ranks, instead of using only locally computed tokens. In addition, since collective communication operations require synchronization of all participant ranks, $T_{e,j}$ is also necessary in computing how many tokens a rank should receive from each of the other ranks. Since only $E$ integers are collected from each rank, this extra all-gather imposes negligible overhead, as demonstrated in \autoref{sec:exp_single_layer_ablation}. The number of replicas allocated to each GPU $R_{e,j}$ from the placement plan is also passed to the token dispatcher.

After $T_{e,j}$ is collected, each rank $i$ independently computes how many tokens it dispatches to each of all $N$ ranks, for all $E$ experts. First, for each expert $e$, the number of tokens each replica should process is computed in line~\ref{eq:dispatch-tokens_per_replica} by evenly distributing all $t_e$ tokens routed to $e$ onto all $r_e$ replicas. The processing capacity of each rank $j$ can then be computed by multiplying $p_e$ with the number of replicas of $e$ that $j$ is assigned (line~\ref{eq:dispatch-rank-capacity}). This capacity will be prioritized towards tokens computed locally on $j$. After the remaining capacities of all ranks are computed, rank $i$ dispatches the remaining ($T_{e,i} -D_{e,i}$) tokens that are beyond $i$'s local processing capacity. The number of tokens $D_{e,j}$ to dispatch to each rank for $e$ is calculated based on their residual capacities (line~\ref{eq:dispatch-tokens_to_each_rank}). 

Since the all-to-all collective operates on a continuous buffer, the token dispatcher has to reshuffle the input activations $h$ to the MoE block, so that tokens routed to the same expert and dispatched to the same rank are grouped together. The total tokens $s_j$ to dispatch to rank $j$ across all experts is computed in line~\ref{eq:dispatch-send_size_to_each_rank}. In lines~\ref{eq:dispatch-reshuffle_start}-\ref{eq:dispatch-reshuffle-end}, these tokens are sorted by their routed experts and placed consecutively in $h^{\prime}$. The reshuffled activations $h^{\prime}$ are then used in the dispatch all-to-all collective, with $s_j$ tokens sent to each rank $j$. The token dispatcher also computes how many tokens to receive from each rank $j$ in the all-to-all in a similar fashion.

At this point, \sys has the ability to adaptively assign expert replicas and dynamically dispatches tokens among replicas of routed experts. Next, we discuss how \sys efficiently migrates to a new configuration upon failures.

\subsection{Efficient Reconfiguration}
\label{sec:effcient-reconfig}
As discussed in \autoref{sec:expert-placement}, if at least a single replica of each expert still remains after failures, \sys can recover without restarting from checkpoints. However, the remaining expert replicas' distribution could deviate from the desired allocation computed for the workload, and their placement may be prone to subsequent failures.
Therefore, \sys must reallocate expert replicas and efficiently migrate to a new placement. 
Such migration is also required when \sys rebalances the expert allocation and when new nodes join.

The ordering of nodes in the placement plan is not enforced in the placement algorithm, as long as each node in the plan maps to a physical node in the cluster. However, when migrating from an old placement plan, such a mapping becomes relevant. It directly determines how many experts' states a node needs to retrieve from other nodes, as only newly assigned ones not in the old placement plan have to be fetched. To reduce the number of replicas to shuffle during migration, hence the communication, \sys applies a greedy algorithm that iteratively maps a physical node to a node in the new placement plan that the number of newly assigned experts is minimized.

After the node mapping is determined, \sys schedules the transfers of expert states. Each node fetches missing states for the newly assigned experts from other nodes that own them. If multiple nodes require the states of the same expert, \sys distributes their state transfers among all owning nodes, to minimize the overall migration time.
\VSPACE{-2mm}

\section{Implementation}
\sys is implemented in 4K LoC in Python and 500 LoC in CUDA, building on top of PyTorch~\cite{imambi2021pytorch}~(v2.3) and using components from DeepSpeed~\cite{rasley2020deepspeed}~(v0.13). 

\noindent
\textbf{\sys controller and agents.} We implement the controller and agents using Python's asynchronous framework. New agents register with the controller, using a TCP socket for communication. The controller maintains a global view of node availability, where agents periodically send heartbeats for it to detect failures. Upon failures or scaling up with newly arrived nodes, the controller computes an updated expert placement plan, which is sent to each agent and relayed to the worker process that uses \sys runtime. The agents also periodically collect expert routing history from each worker and send it to the controller for expert rebalancing.

\noindent
\textbf{\sys runtime.} Based on the controller's configuration, our runtime sets up NCCL~\cite{nccl} communication groups for expert and non-expert gradients all-reduce, as well as all-to-all in expert computation. We implement data parallelism and expert parallelism with our adaptive expert placement;
however, \sys can be extended to combine with pipeline parallelism using techniques like Oobleck~\cite{jang2023oobleck}, which are orthogonal and complementary to ours. Upon failures, enqueued NCCL operations time out and the model states are not updated on the failed step, while a new configuration is received from the agent via a listener thread. Batched NCCL send/recv primitives are used to transfer states during migration. For scaling up and rebalancing, \sys performs reconfiguration lazily, only after the current training step is finished.
\VSPACE{-2mm}
\section{Evaluation}
\subsection{Setups}
\label{sec:exp-setup}
\noindent
\textbf{Testbed.} We have five servers in our testbed, each with 2 NVIDIA RTX 3090 GPUs and a 100\,Gbps Mellanox ConnectX-5 NIC connected to a single 100\,Gbps Mellanox SN2100 switch. Due to limited resources, we treat each GPU as a separate node to emulate a cluster of 10 GPUs. To store checkpoints, we deploy a NFS server on a separate machine, which is connected to the GPU servers via 10\,Gbps NICs.

\noindent
\textbf{Baselines.}
As there is no existing system to support resilient and elastic training of MoE models, we compare \sys against a checkpoint-based baseline using DeepSpeed MoE (DS)~\cite{rajbhandari2022deepspeed}, a widely adopted MoE training system with both system-side and model design-side optimization. To evaluate \sys's adaptive expert placement algorithm and flexible token dispatcher, we also build a fault tolerant baseline based on DeepSpeed MoE, utilizing efficient reconfiguration module from \sys runtime. We denote this baseline as DS(FT). Similar to \sys, if a complete replica of all experts still exists upon failures, it reconfigures the workers (reassigns EP groups) and retrieves required model (expert) states from owning nodes. 

\begin{table}
\centering
\caption{Configurations of models used in the evaluation.}
\begin{tabular}{lccc}
    \toprule
     &  GPT-S & GPT-M & GPT-L \\
    \midrule
    \# Layers & 12 & 12 & 12 \\
    Feature dim. &  768 & 1024 & 1024\\ 
    \# Experts & 8 & 12 & 16 \\
    \# Params & 521M & 1.3B & 1.7B \\
    \bottomrule
\end{tabular}
\label{tab:workload-setup}
\VSPACE{-4mm}
\end{table}

\noindent
\textbf{Workloads.} Based on the widely used GPT-2 architecture, we adopt three MoE models of varying sizes and number of experts, listed in \autoref{tab:workload-setup}. We use a per-GPU batch size of 4 and a sequence length of 1024 following GPT-2's setup~\cite{fedus2022switch}. For all evaluations, we use Wikitext-2 dataset~\cite{merity2016pointer}, top-1 gate, and FP16 precision for training.

For reproducibility, we use the routing history trace from SmartMoE~\cite{zhai2023smartmoe} artifact to emulate gate networks' routing decisions. We use the loads of the top experts at each layer to construct a routing trace for each of the models we evaluate.
We set the number of expert replica slots for each GPU to 6, which is the upper limit based on available GPU memory.
With DS's traditional expert parallelism, GPT-M can fully utilize all slots, while GPT-S and GPT-L can only use 4, as the multiple of slots per GPU and EP size must equal to the number of experts. GPT-S and GPT-M can utilize an EP size of 2, hence DS and DS(FT) fully utilize all 10 nodes in the cluster, while with 16 experts and an EP size of 4, they can only utilize 8 nodes on GPT-L. We set the checkpoint interval to every 50 steps for DS and every 250 steps for DS(FT), unless mentioned otherwise. We set the minimum replicas per expert ($f$) to 2 for \sys so that recovery is guaranteed under common single node failure scenarios. \sys rebalances expert replica allocation every 200 steps.

\VSPACE{-2mm}
\subsection{Controlled Single Node Failures}
\label{sec:control-single}

\begin{figure*}
\centering
\includegraphics[width=0.95\linewidth]{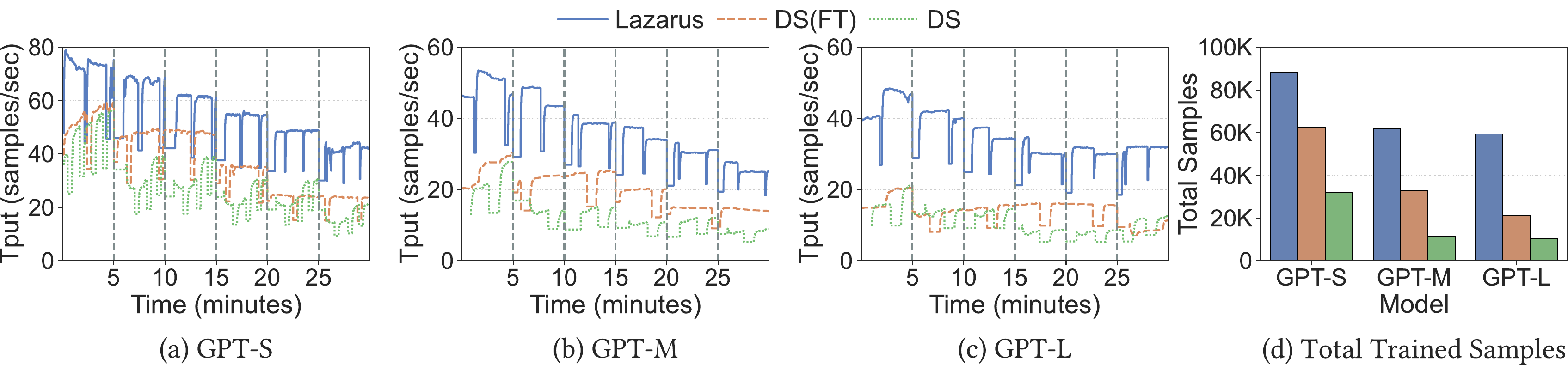}
\caption{\textbf{[Single node failure]:} Throughput and total trained samples with a single node fails every 5 minutes. DS refers to checkpointing-based DeepSpeed MoE; DS(FT) is a fault tolerant version we build using components from \sys's runtime.}
\label{fig:control-5min}
\VSPACE{-2mm}
\end{figure*}

\begin{figure*}
\centering
\includegraphics[width=0.95\linewidth]{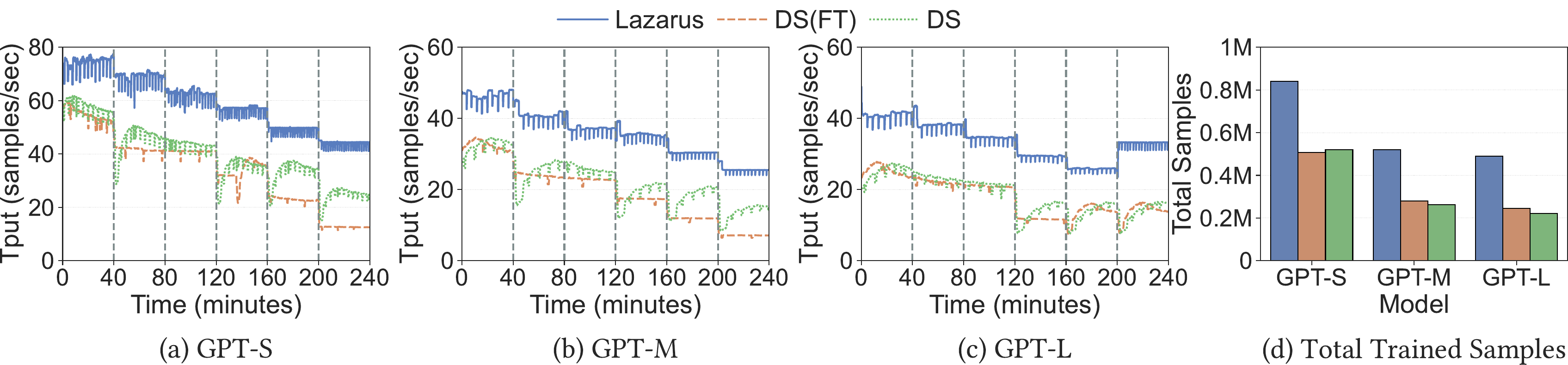}
\caption{\textbf{[Single node failure]:} Throughput and total trained samples with a single node fails every 40 minutes.}
\label{fig:control-40min}
\VSPACE{-2mm}
\end{figure*}

We first evaluate the performance in a more common case where a single node fails at a time. We consider both high failure frequency and low frequency scenarios, where we randomly choose a node to fail every 5 or 40 minutes, until only half of the nodes remain. The same set of nodes is selected to fail in each run for fair comparison. The results are shown in \autoref{fig:control-5min} and \autoref{fig:control-40min}. The throughput is smoothed over a short time window for visibility. The fluctuation in \sys's throughput is caused by the reconfiguration after node failures and the periodic rebalance of expert allocations, while the fluctuation in DS and DS(FT) is caused by checkpointing, restarting and reconfiguration (only for DS(FT)). To reduce the overhead of checkpointing for DS and DS(FT) in the low failure frequency (40 minutes) setting, we increase their checkpoint intervals by 4x to 200 steps and 1000 steps, respectively. We also note that using such low checkpoint frequency would prevent DS from making any effective progress under high failure frequency (5 minutes). 

From \autoref{fig:control-5min} with a failure frequency of 5 minutes, we observe that over the 30 minutes duration of training, \sys finished a total of 2926 and 1996 steps, trained 2.8x and 5.7x samples on GPT-S and GPT-L, compared with DS. The performance gains significantly increase on GPT-L, as the checkpointing and restarting overhead grows with model sizes. Moreover, as in the GPT-L setting (EP size is 4), 4 nodes are required to hold a complete replica of all experts for DS and DS(FT), they can only utilize either 4 or 8 nodes, while they can utilize all 10 nodes at the start for GPT-S and GPT-M.

\sys also outperforms DS(FT) by 1.4x and 2.8x on GPT-S and GPT-L. On the smaller GPT-S, there are a large number of replicas for each expert (5 replicas initially), hence DS(FT) can recover in each failure. However, as the number of experts and EP size increases on GPT-L, DS(FT) has to restart from checkpoints after failures of both EP groups.

When the failure is infrequent as shown in \autoref{fig:control-40min}, the performance difference between \sys and DS decreases as the overhead of checkpointing and restarting decreases. Still, \sys outperforms DS by 1.6x and 2.3x on GPT-S and GPT-L. As the overhead of DS decreases, DS and DS(FT) have similar performance in this case.

We also observe that \sys's throughput tends to monotonically decrease as the number of nodes decreases, as \sys can fully utilize all remaining nodes for training. While for DS and DS(FT), the throughput experiences steep drops since they can only utilize a multiple of EP size nodes. We note that the throughput of \sys increases in the last 40 minutes in \autoref{fig:control-40min}. This is because \sys no longer enforces a minimum of 2 replicas for each expert for fault tolerance, as there are not enough slots with 5 nodes left.

\sys still outperforms DS by a great margin, even when both of them fully utilize all 10. For instance, for GPT-M, during the first 40 minutes in \autoref{fig:control-40min} when no node fails, \sys has a throughput of 45 samples/sec during effective computation (factoring out checkpoint and rebalance overheads), while DS only reaches 34 samples/sec.

From \autoref{fig:control-5min}, we also observe that compared to DS, DS(FT) sometimes has higher throughput during effective computation. For instance, for GPT-M, DS(FT) outperforms DS by 1.6x during the 5\textasciitilde10 minutes window, when they all fully utilize the remaining 8 nodes. This is mainly caused by the highly imbalanced expert loads during the early periods of training, while DS(FT) progresses much faster without the overhead of checkpoint and restarting. When we increase the checkpoint intervals by 4x for both baselines in \autoref{fig:control-40min}, together with the lower failure frequency, such divergence disappears. Instead, for GPT-S and GPT-M, DS(FT) is slower than DS during the last 80 minutes. In these two cases, DS(FT) always resumes training by reconfiguring currently used nodes that are still alive. It does not use previously dropped nodes (due to exceeding EP size of 2), while DS attempts to utilize all nodes it can when restarting.

Overall, checkpointing and restarting overhead becomes increasingly significant with larger models and higher failure frequency. Comparing with DS(FT) which shares \sys's efficient reconfiguration runtime, \sys's adaptive expert placement improves both training throughput and resiliency.

\VSPACE{-2mm}
\subsection{Controlled Multi Node Failures}

\begin{table}
\resizebox{0.99\columnwidth}{!}{
    \begin{tabular}{lcccc}
        \toprule
         & \multicolumn{2}{c}{GPT-S} & \multicolumn{2}{c}{GPT-L} \\
        \cmidrule(lr){2-3} 
        \cmidrule(lr){4-5}
         & step 200 & step 4000 & step 200 & step 4000 \\
         \midrule
         \# Lost nodes & 2 & 3 & 4 & 5 \\
         \midrule
         Reconfig time (s) & 21.3 & 34.1 & 18.2 & 19.7 \\
         \midrule
         \# Experts transfer & 11 & 52 & 160 & 55 \\
         \midrule
         Transfer time (s) & 2.3 & 3.0 & 7.6 & 7.8 \\
        \bottomrule
    \end{tabular}
}
    \caption{\textbf{[Multi-node failures]:} Recovery overhead of \sys under multiple node failures on sampled cases.}
    \label{tab:recovery-overhead}
    \VSPACE{-4mm}
\end{table}

\begin{figure}
    \begin{subfigure}{0.48\linewidth}
        \centering
        \includegraphics[width=\linewidth]{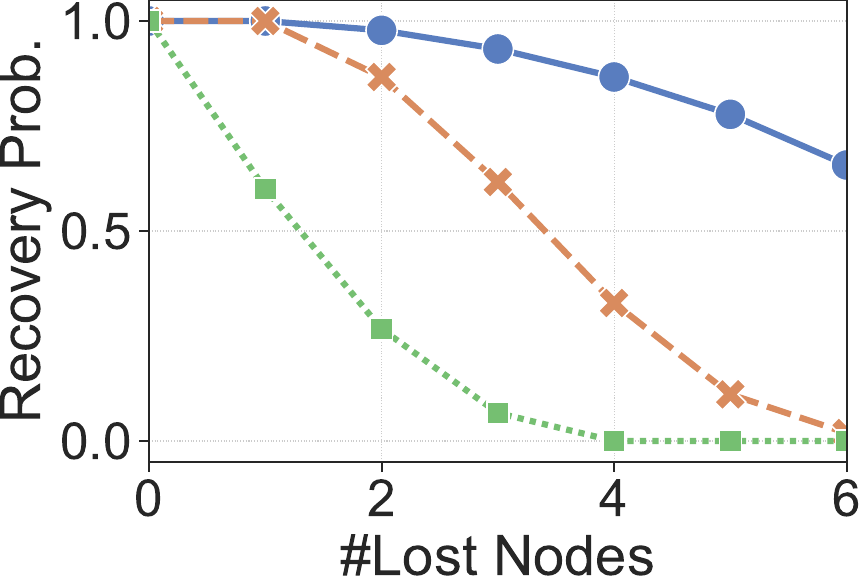}
        \caption{GPT-S (step 4000)}
        \label{fig:recovery_prob_multi_fail_gpt_s}
    \end{subfigure} \hfil
    \begin{subfigure}{0.48\linewidth}
    \centering
      \includegraphics[width=\linewidth]{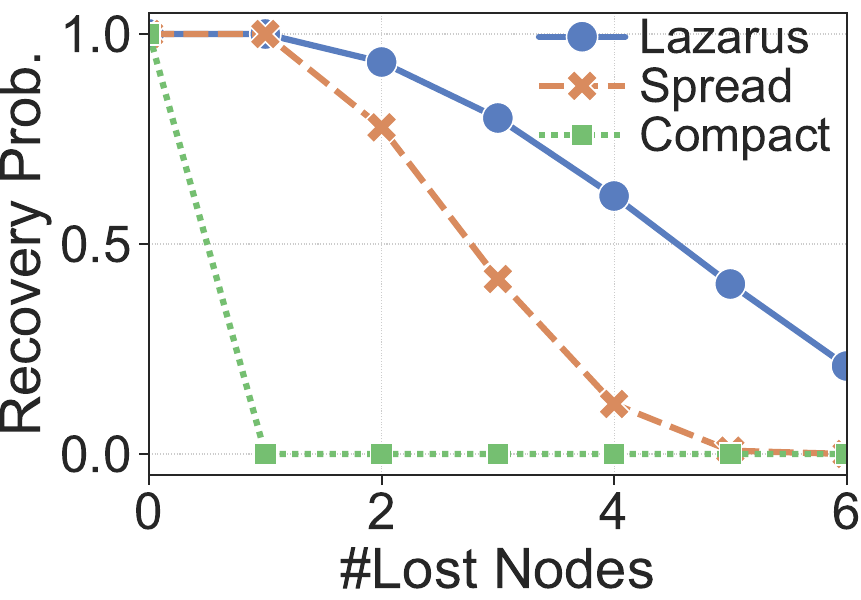}
        \caption{GPT-L (step 200)}
        \label{fig:recovery_prob_multi_fail_gpt_l}
    \end{subfigure}
\caption{\textbf{[Multi-node failures]:} Recovery probabilities using different expert placement strategies.}
\label{fig:recovery_prob}
\VSPACE{-4mm}
\end{figure}

\begin{figure*}
\centering
\includegraphics[width=0.94\linewidth]{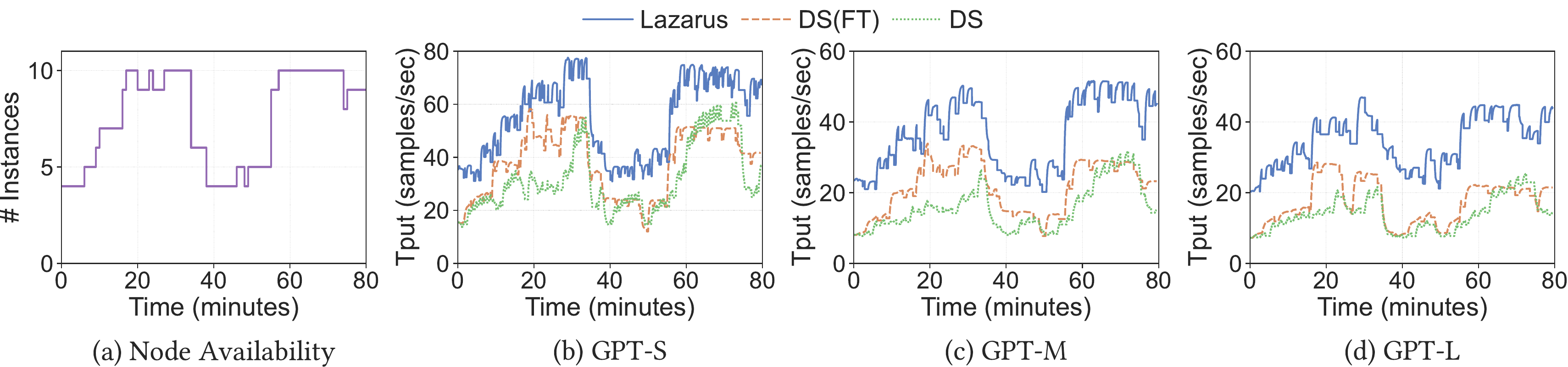}
\caption{\textbf{[Spot instance]:} Throughput changes in spot instance environment.}
\label{fig:spot-trace}
\end{figure*}

Next, we study how well \sys handles simultaneous failures of multiple nodes. Whether \sys can recover from such failures depends on both the expert allocation (i.e., how many replicas are assigned to each expert) and expert placement, as well as which concrete set of nodes fail.  The allocation and placement changes as the expert load distribution varies over the duration of training, and it is also different for different layers. Hence, we evaluate \sys's system overhead of recovery by sampling several cases for GPT-S and GPT-L at different training steps, while we evaluate \sys's placement algorithm by computing the recovery probability for a model at a given training step. The recovery probability can be computed by enumerating all possible combinations of failed nodes, since the way experts are allocated and placed only depends on the expert load at the particular step. 

The recovery overhead for sampled cases is shown in \autoref{tab:recovery-overhead}, where 2 to 5 nodes are selected to fail at training step 200 and 4000. We report the total number of experts replicas that need to be transferred between nodes and the time spent on the state transfers. The weights and optimizer states of each is 63MB for GPT-S and 112MB for GPT-L. We find that the overhead of state transfers is negligible. This low overhead is mainly contributed by the fact that required states can be fetched from other nodes instead of the much slower remote storage, and \sys balances the point to point send/recv operations among all owning ranks of an expert's states. We also report the total reconfiguration time, from failure occurrence to training resumption, where state transfers only constitute a small portion. Throughout our entire evaluation, we find that each reconfiguration event takes 20\textasciitilde40 seconds. It takes 10\textasciitilde20 seconds for enqueued NCCL kernels to time out and 5\textasciitilde15 seconds for reconfiguring NCCL's communication groups. We also observe that the placement plan's computation takes less than 100ms.

To demonstrate the effectiveness of \sys's fault-tolerant expert placement algorithms, we compare it with two baselines: a spread placement strategy which distributes each expert's replicas across different nodes in a round-robin fashion, and a compact strategy that packs an expert's replicas on a minimum number of nodes. The recovery probabilities with respect to the number of nodes failed are illustrated in \autoref{fig:recovery_prob}. We find that \sys's placement algorithm greatly outperforms both baselines. For instance, for GPT-L at step 200, \sys has a 41\% recovery probability with 4 node failures, compared to 12\% of spread placement. We also observe that on the smaller GPT-S when expert loads are relatively more balanced at later step 4000, compact placement achieves limited recovery capability with 1 or 2 node failures. However, it completely fails to recover in any failure scenario on the larger GPT-L with 16 experts.

\subsection{Spot Instance Trace}
\label{sec:spot-instance}
We also borrow a real spot instance node availability trace from Bamboo~\cite{thorpe2023bamboo} to evaluate \sys under both failures and scaling-up. The trace includes both preemption events and node additions. We replay a representative 80 minutes segment of the availability trace collected on AWS EC2 P3 instances. As the original trace is collected on a 32 nodes cluster, we cap the maximum number of nodes to 10 in our testbed setup. To handle rare cases where recovery is not possible due to too many nodes failing at the same time, we also apply periodic checkpointing for \sys. We set the checkpoint interval to every 250 steps, the same as DS(FT), for fair comparison. For node addition events, all compared methods waited for 2 minutes to accumulate sufficient nodes before scaling up, to avoid frequent reconfiguration or restarting. The results are shown in \autoref{fig:spot-trace}. 

Over the 80 minutes duration, \sys trained 2.3x and 3.4x samples on GPT-S and GPT-L, compared with DS. \sys outperforms DS(FT) by 1.2x and 1.8x on GPT-S and GPT-L. We also note that \sys's throughput changes proportionally to the number of nodes available, as \sys wastes no node, while DS and DS(FT) are limited by EP sizes. 

Due to the overhead of checkpointing and restarting, DS trained 51\% and 48\% fewer samples than DS(FT). DS(FT) can always recover from failures for GPT-S and GPT-M, as it evenly allocates up to 5 replicas to all experts at a cost of reduced throughput. For GPT-L, however, when there is fewer than 8 nodes, DS(FT) cannot utilize more than 4 nodes for redundancy. It has to restart from the checkpoint each time, leading to 3\textasciitilde5 minutes of lost progress.

We observe that only in a single preemption event when 4 nodes are lost at 34 minutes, \sys has to restart from checkpoint. Note that in the original trace, only a maximum of 19\% nodes failed at a time.

\VSPACE{-2mm}
\subsection{Ablation Study}
\subsubsection{Impacts of Expert Load Imbalance}
\label{sec:exp_single_layer_ablation}

\begin{figure}
    \begin{subfigure}{0.48\linewidth}
        \centering
        \includegraphics[width=\linewidth]{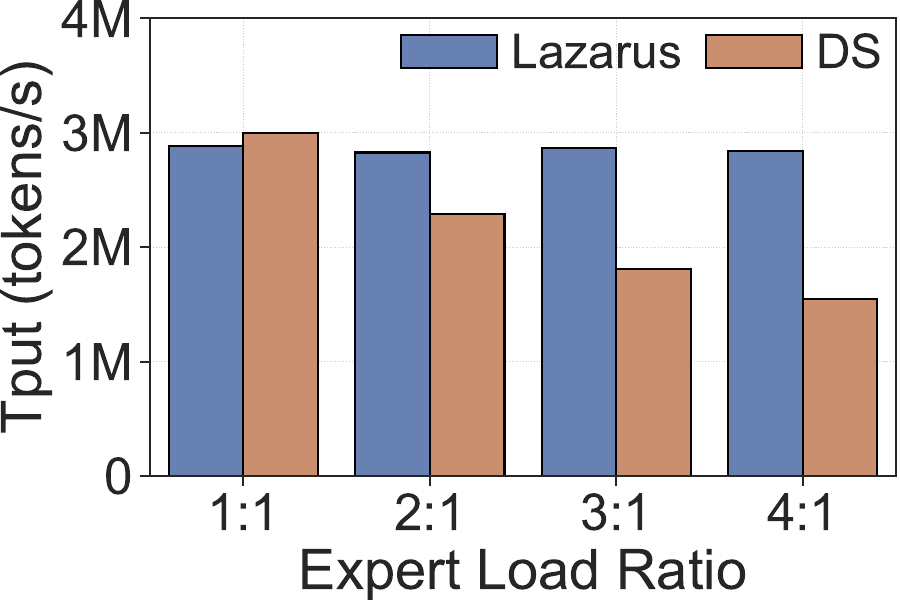}
        \caption{Throughput}
        \label{fig:single_layer_throghput}
    \end{subfigure} \hfil
    \begin{subfigure}{0.48\linewidth}
    \centering
      \includegraphics[width=\linewidth]{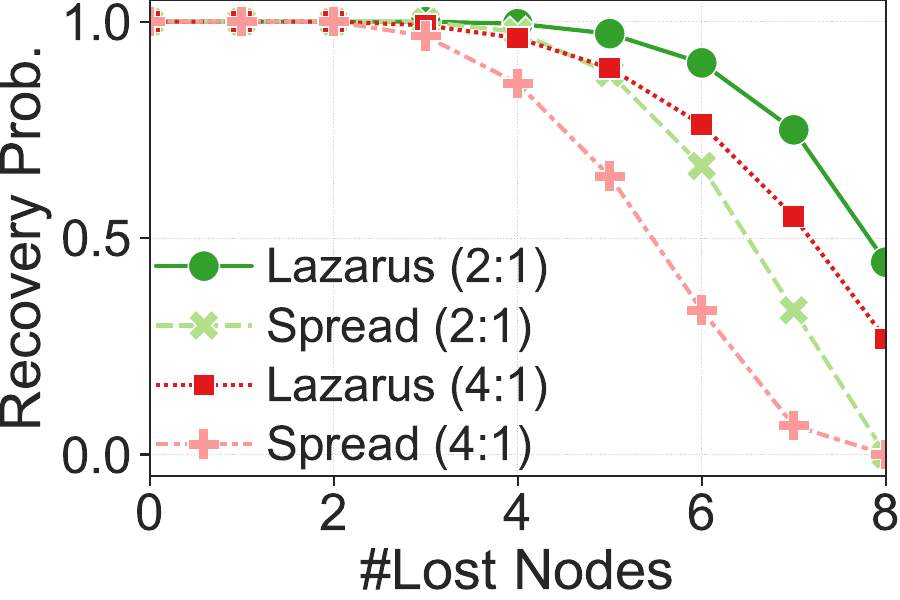}
        \caption{Recovery Probability}
        \label{fig:single_layer_recovery}
    \end{subfigure}
\caption{\textbf{[Ablation Study]:} Single layer throughput and recovery probabilities under different expert load ratios.}
\label{fig:single_layer_ablation}
\VSPACE{-2mm}
\end{figure}

To study how the expert load imbalance in workloads affects both \sys's performance and fault resiliency, we build a single MoE layer with 8 experts and a feature dimension of 1024. We construct workloads with different expert load ratios. We show the layer forward throughput in \autoref{fig:single_layer_throghput}. Here, a load ratio of 4:1 indicates that 4x more tokens are routed to one of the experts than if all experts are evenly routed to.

We observe that \sys's throughput remains constant as the load ratio changes, attributed to \sys's adaptive expert allocation based on expert load distribution. DS's throughput, however, dramatically decreases as the workload becomes more skewed. When the workload is perfectly balanced (1:1), \sys suffers a small overhead due to its token dispatcher.

We also evaluate the effectiveness of \sys's expert placement algorithm in fault tolerance as the load distribution changes. \autoref{fig:single_layer_recovery} shows the recovery probability of \sys with varying numbers of failed nodes on 2:1 and 4:1 load ratios, compared with the spread placement strategy. We observe that the recovery probability decreases with more imbalanced workload, as less popular experts are assigned fewer replicas. Still, \sys's placement algorithm is much more effective than spread placement, while our previous evaluation demonstrates the increased throughput is worth the effort of skewed expert allocation.

\VSPACE{-2mm}
\subsubsection{Running Time Breakdown}

\begin{figure}
\centering
\includegraphics[width=.95\linewidth]{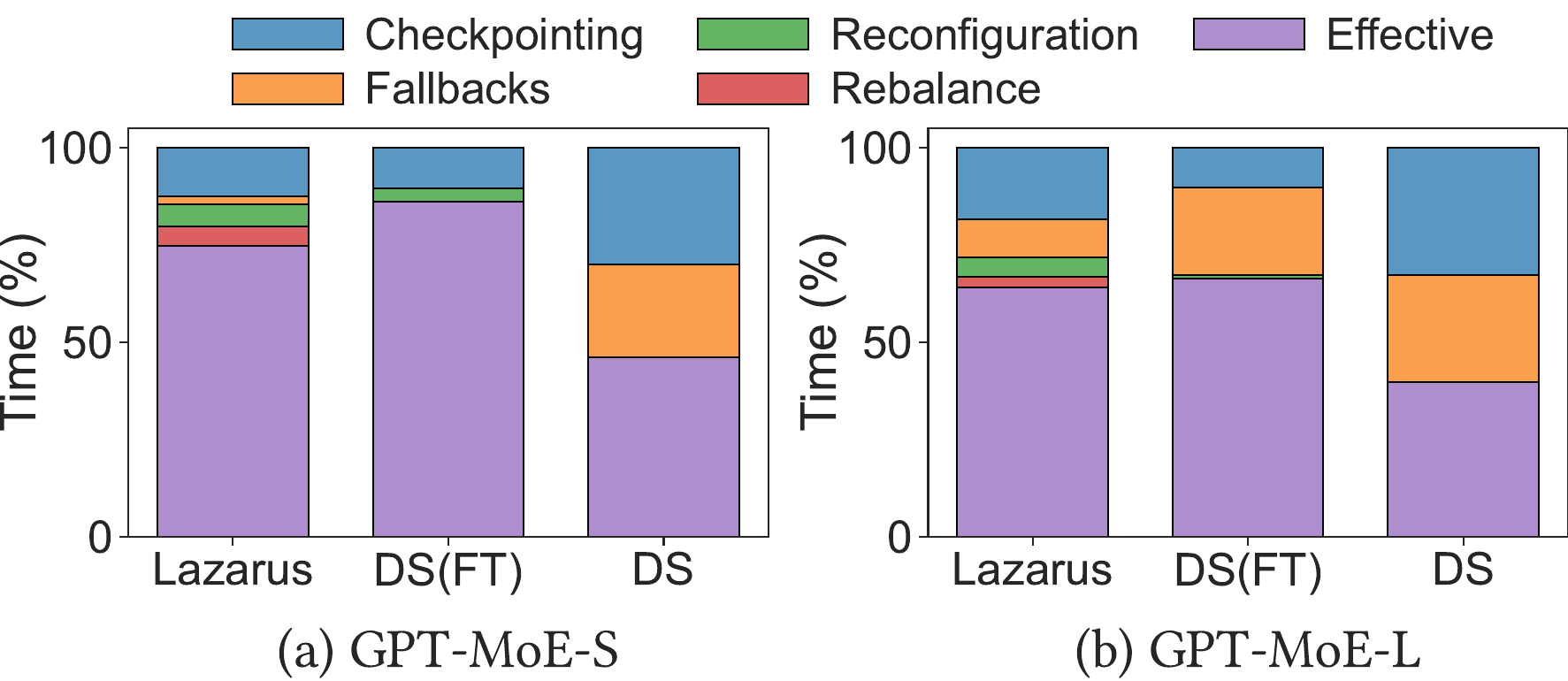}
\caption{\textbf{[Ablation Study]:} Running time breakdown of GPT-S and GPT-L on the spot instance trace.}
\label{fig:time-breakdown}
\VSPACE{-4mm}
\end{figure}

We breakdown the running time on the spot instance trace from \autoref{sec:spot-instance} in \autoref{fig:time-breakdown}. Both \sys and DS(FT) have much more time spent in effective computation, benefiting from efficient reconfiguration module in \sys runtime, while over half of the time is spent on checkpointing and restarting (fallbacks) for DS. The reconfiguration and rebalance overhead of \sys is much smaller 
than restarting, accepting for less than 10\%. We also find that DS(FT) can recover in all cases on GPT-S, yet it suffers 27\% restarting overhead on GPT-L. Despite similar effective time, \sys outperforms DS(FT) by 1.8x in terms of total trained samples, contributed by our adaptive expert allocation and flexible token dispatcher.

\subsection{Comparison with Other MoE Training Systems}
\label{sec:comparison_tutel}
\begin{figure*}
\centering
\includegraphics[width=0.99\linewidth]{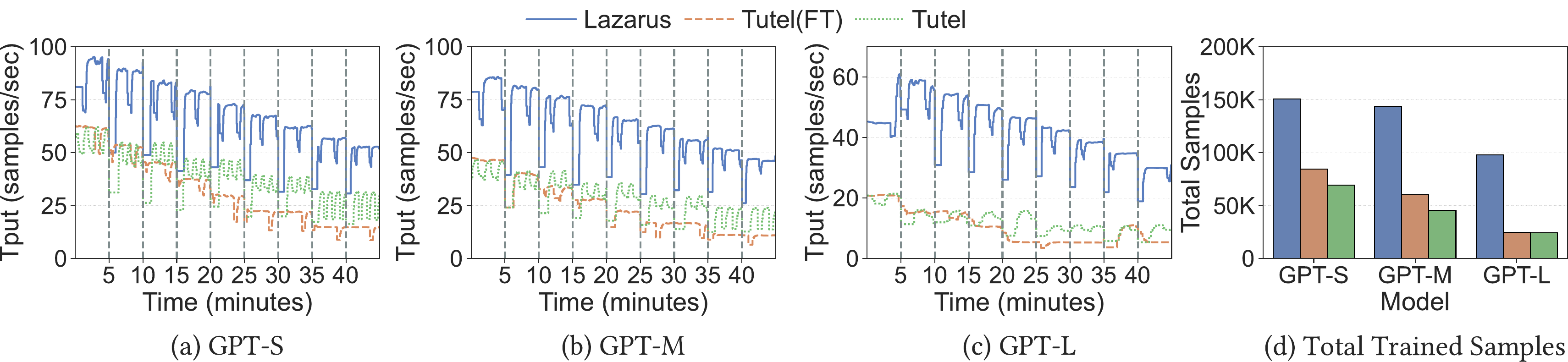}
\caption{\textbf{[Comparison with Tutel]:} Throughput and total trained samples with a single node fails every 5 minutes.}
\label{fig:aws-control-5min}
\VSPACE{-2mm}
\end{figure*}

\begin{figure}
    \begin{subfigure}{0.48\linewidth}
        \centering
        \includegraphics[width=\linewidth]{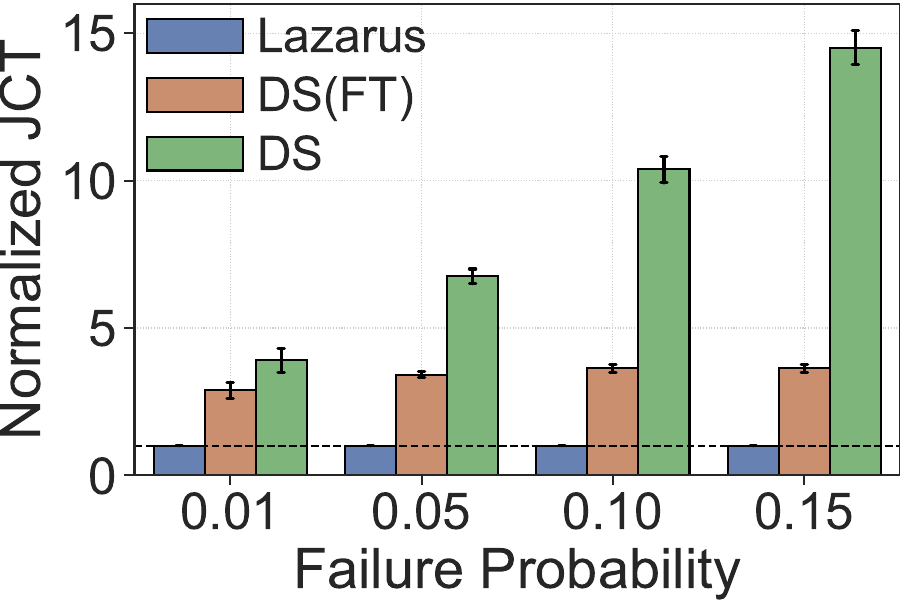}
        \caption{Different failure probability}
        \label{fig:simulation_fail_prob}
    \end{subfigure} \hfil
    \begin{subfigure}{0.48\linewidth}
    \centering
      \includegraphics[width=\linewidth]{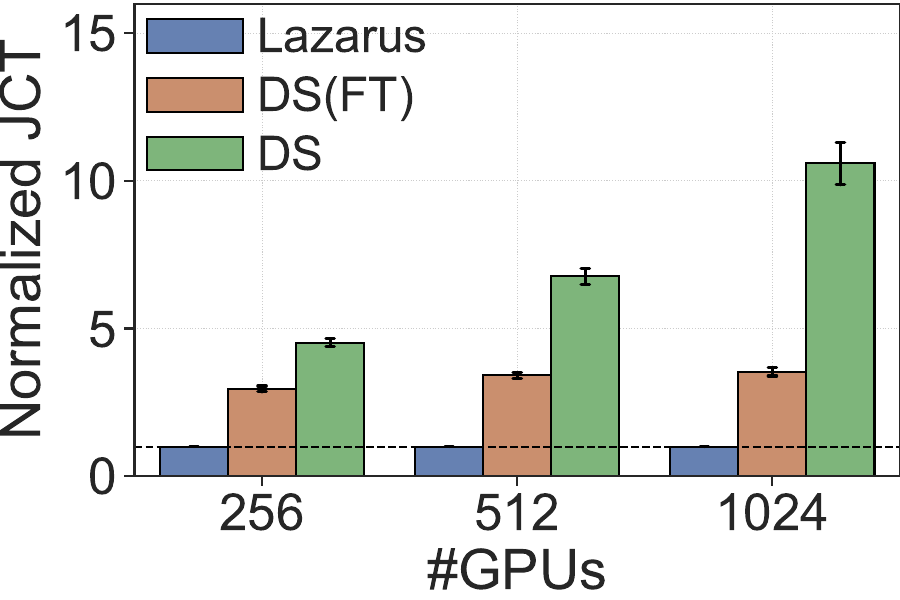}
        \caption{Different number of GPUs}
        \label{fig:simulation_num_gpus}
    \end{subfigure}
\caption{\textbf{[Simulation]:} Simulated training performance of DeepSeek~V3, under different failure probabilities with 512 GPUs and with different \#GPUs under 5\% failure probability. Error bars represent 95\% confidence intervals.}
\label{fig:simulation}
\VSPACE{-4mm}
\end{figure}

While many MoE training optimizations are orthogonal and can be directly applied to \sys to speed up training, some modify the token dispatch logic and are nontrivial to integrate. Still, \sys significantly improves end-to-end training performance with fault tolerance and elasticity. 

Here we study how \sys compares to Tutel~\cite{hwang2023tutel}, a MoE training system that implements state-of-the-art kernel optimizations for dispatch and combine operations under traditional expert parallelism. We also build a fault tolerant variant of Tutel using \sys's runtime reconfiguration module, which we denote as Tutel(FT). To demonstrate that \sys adapts to different hardware setups, we set up a testbed on AWS using 16 \texttt{g5.2xlarge} instances. Each instance has an NVIDIA A10G GPU, while the instances are connected via a 10~Gbps TCP network. To store checkpoints, we employ AWS EFS that provides shared file systems. Due to the limited network bandwidth, we also employ the widely-used gradient accumulation technique~\cite{smith2022using} with an accumulation step of 20 to avoid frequent gradient synchronization. We keep other settings the same as in \autoref{sec:control-single}.

We report in \autoref{fig:aws-control-5min} the training performance under a random node failure of every 5 minutes. \sys outperforms Tutel(FT) and Tutel by 1.8x and 2.1x for GPT-S, and by 3.9x and 4.0x for GPT-L. We also observe that with more nodes in the cluster, Tutel(FT) can leave many unused, as it cannot reuse idle nodes that are previously dropped without restarting, same as DS(FT) in \autoref{sec:control-single}. 
In the supplementary material, we provide the results under an ideal case where subsequent failed nodes are the unused ones by Tutel and Tutel(FT).

\VSPACE{-2mm}
\subsection{Simulation}
We evaluate how \sys scales to larger models and larger clusters via simulations. We follow Bamboo~\cite{thorpe2023bamboo} to setup our simulation, using a constant node preemption probability while varying new node allocation probabilities per simulation hour. We simulate the training of the DeepSeek~V3 model~\cite{liu2024deepseek}, where each node has 8 H200 GPUs and 8 400~Gbps NICs. We use the performance model from \cite{shallowsim}. We consider a mixture of data, tensor and expert parallelism, where tensor parallelism is used within each node for non-expert components, while each GPU holds 4 expert replicas for both traditional expert parallelism and \sys. We sample an expert routing trace for DeepSeek-V3 using ShareGPT dataset~\cite{zheng2023judging}. We run 10 trials for each setting.

In \autoref{fig:simulation_fail_prob}, we show the job completion time (JCT) for training 5M samples with 512 GPUs (64 nodes) under different failure probabilities, where the JCT is normalized to \sys. With a low failure probability of 0.01, \sys outperforms DS(FT) and DS by 2.9x and 3.9x, while it increases to 3.6x and 14.5x under a failure probability of 0.15.
In \autoref{fig:simulation_num_gpus}, we show the JCT under a failure probability of 0.05 with different numbers of GPUs. With 256 GPUs, \sys speeds up over DS(FT) and DS by 2.9x and 4.5x, while with 1K GPUs, \sys' speed-up increases to 3.5x and 10.6x. We also observed that our expert placement algorithm in \autoref{sec:expert-placement} takes less than 1~sec on a single CPU core for 1K GPUs.
\VSPACE{-2mm}

\section{Related Work}

\textbf{MoE training systems}
Extensive studies have focused on optimizing MoE training. A series of works~\cite{rajbhandari2022deepspeed,shi2024schemoe,jiang2024lancet,li2023accelerating,liu2023janus,nie2022hetumoe} optimize the all-to-all communication performance. Another line of works design different MoE algorithms and architectures~\cite{rajbhandari2022deepspeed,nie2022hetumoe,li2024locmoe,zhou2022mixture,zoph2022st,zuo2021taming,chi2022representation}. Various system optimizations have been proposed to deal with the imbalanced workload. For example, Tutel~\cite{hwang2023tutel} and SmartMoE~\cite{zhai2023smartmoe} propose dynamic parallelism switching; FasterMoE~\cite{he2022fastermoe} and FlexMoE~\cite{nie2023flexmoe} also utilize the idea of expert replication. However, these works all focus on speeding up training on a fixed-sized cluster, while \sys considers an elastic environment where resiliency and quick reconfiguration is crucial. Many of these optimizations can also be integrated to \sys.

\noindent
\textbf{Fault-tolerant and elastic training.}
Early efforts in elastic training focus on small models trained with pure data parallelism. TorchElastic~\cite{torchelastic} restarts a job upon node membership changes. Elastically allocating resources among multiple jobs have also been explored in~\cite{qiao2021pollux,zheng2023shockwave,hwang2021elastic,gu2023elasticflow,li2022aryl}. However, they do not work for modern LLMs which are frequently well beyond a single GPU's memory capacity. To enable resilient training of large models, many checkpointing optimization techniques have been proposed~\cite{wang2023reliable,wang2023gemini,cai2025moc,gandhi2024moetion,wan2024bytecheckpoint}. Yet, they lack elasticity and require replacement nodes to resume training.
We note that Gemini~\cite{wang2023gemini} designs a strategy for placing checkpoints in CPU memory across machines to maximize recovery probability. However, it assumes each GPU's checkpoint has the same number of replicas, hence does not apply to our expert placement problem, where different experts have different number of replicas. Systems supporting both resilient and elastic training of LLMs~\cite{thorpe2023bamboo,jang2023oobleck,duan2024parcae} are all based on pipeline parallelism, utilizing its flexibility in stage-device mapping. These works are complementary to \sys, where we target expert parallelism introduced in MoE.

\VSPACE{-2mm}

\section{Conclusion}
This paper presents \sys, the first system for resilient and elastic distributed training of MoE models. \sys adaptively allocates replicas based on the expert routing distribution of the workload to speed-up training. With a proven optimal expert placement strategy, \sys maximizes the probability of failure recovery. Upon failures, \sys efficiently migrates to a new expert placement plan with all remaining GPUs fully utilized. Our results show that \sys outperforms state-of-the-art checkpointing based MoE training systems by up to 5.7x under frequent node failures and 3.4x on a real spot instance trace. We will open source \sys.

\bibliographystyle{ACM-Reference-Format}
\bibliography{reference}

\clearpage
\appendix
\section{Supplementary Material}

\begin{figure*}
\centering
\includegraphics[width=0.99\linewidth]{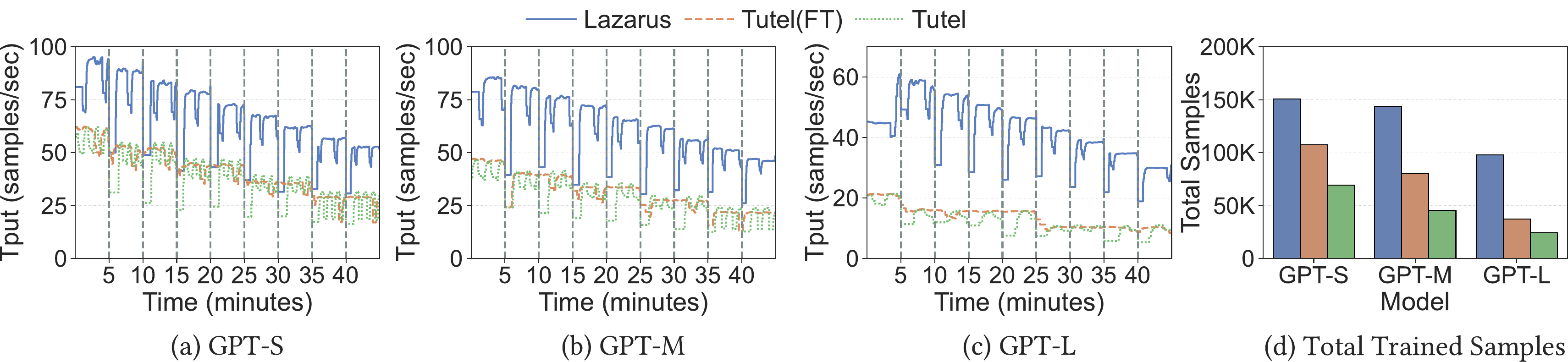}
\caption{\textbf{[Comparison with Tutel]:} Throughput and total trained samples with a single node fails every 5 minutes, where subsequent failed nodes are unused nodes for Tutel and Tutel(FT).}
\label{fig:aws-pairfail-control-5min}
\end{figure*}

\subsection{Training Performance under Ideal Failure Patterns for Tutel and Tutel(FT)}

In \autoref{fig:aws-pairfail-control-5min} we present the training performance under a single node failure every 5 minutes. We ensure that subsequent failed nodes are nodes that are previously dropped by Tutel and Tutel(FT), as the total number of nodes is not a multiple of EP size. We keep other settings the same as in \autoref{sec:comparison_tutel}. 

In this case, after initial failures at 5 minutes, Tutel and Tutel(FT) essentially only encounter a failure every 10 minutes for GPT-S and GPT-M, due to an EP size of 2; every 20 minutes for GPT-L, due to an EP size of 4. In terms of total trained samples, \sys outperforms Tutel(FT) by 1.4x for GPT-S and by 2.6x for GPT-L.

\subsection{Proof of Optimality of the MRO Placement Plan}
Recall the setting of our placement problem, we have $N$ nodes, $E$ experts, each node can hold $c$ expert replicas. The $i-$th expert has $r_i$ replicas. Assume there are $R$ nodes alive simultaneously, we want to find a placement plan that maximizes the probability of recovering all the experts when the $R$ alive nodes are sampled uniformly. 
We denote $[k]$ as the set of $\{1,2,\cdots,k\}$.
We use integer matrix $T\in \mathbb{N}^{c\times N}$ to denote the placement plan, $T_{ij}$ represents the expert placed at node $j$'s $i$-th slot. $T$ satisfies the following properties:\\
\begin{align}
\begin{aligned}
    & T_{ij}\in [E], \forall i\in [c], j \in [N]  \\
    & r_k=\sum_{i=1}^c \sum_{j=1}^N \mathbbm{1}_{T_{ij}=k}  , \forall k \in [E]
\end{aligned}
\end{align}
Without loss of generality, we assume $r$ is sorted in the ascending order, $r_1\le r_2\le\cdots \le r_m$. Let $Col_j$ denote the set composed of elements in the $j$-th column of $T$(removing duplicates), $j=1,\cdots,N$. Let $A$ be the set of $R$ random columns that are alive, $A$ is uniformly sampled. Our goal is:\\
\begin{align}
\begin{aligned}
    & \max \mathsf{Pr}(\bigcup_{a \in A}Col_a=[E])  \\
\end{aligned}
\end{align}

\begin{theorem}
 The \textbf{maximum rank overlap} placement plan (MRO plan) is defined as follows:  $[N]$ could be partitioned into $\lceil \frac{E}{c}\rceil$ disjoint subsets: $|S_i|=r_{1+(i-1)*c},  i \in [\lceil\frac{E}{c}\rceil-1] $, $|S_{\lceil\frac{E}{c}\rceil}|=\min\{N-\sum_{j=1}^{\lceil\frac{E}{c}\rceil-1} r_{1+(\lceil\frac{E}{c}\rceil-1)*c}, r_{1+(\lceil\frac{E}{c}\rceil-1)*c}\}$, such that, for $\forall i \in [\lceil\frac{E}{c}\rceil], j\in S_i$, $\{1+(i-1)*c,\cdots, \min\{i*c, E\}\} \subseteq Col_j$. We prove that any MRO plan $T$ maximizes $\mathsf{Pr}(\bigcup_{a \in A}Col_a=[E])$. \\
\end{theorem}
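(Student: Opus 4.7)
The plan is to prove Theorem~\ref{theorem:mro_optimality} by induction on the number of groups $g=\lceil E/c\rceil$. For the base case $g=1$ (so $E\le c$), the MRO layout enforces the chain $S_1\subseteq S_2\subseteq\cdots\subseteq S_E$, so recovery of all experts reduces to the event $A\cap S_1\ne\emptyset$, whose probability depends only on $r_1=|S_1|$. Since recovering expert~$1$ is a necessary condition for full recovery under every plan, and by the symmetry of uniform node sampling that necessary event has probability depending only on $r_1$, MRO attains the trivial upper bound, matching the argument outlined in the main text.

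For the inductive step, I would proceed in three phases. First, I characterize $P_{\mathrm{MRO}}$ in closed form: the within-group chain $S_{1+(i-1)c}\subseteq S_{2+(i-1)c}\subseteq\cdots\subseteq S_{ic}$ reduces recovery of group $i$ to the representative event $A\cap S_{1+(i-1)c}\ne\emptyset$, and since the representative sets are pairwise disjoint under MRO, $P_{\mathrm{MRO}}=\mathsf{Pr}(\text{every node group contains a surviving node})$. Second, for an arbitrary plan $T$ I perform an exchange argument: after relabeling nodes so that expert~$1$ lives on $\{1,\ldots,r_1\}$, I build a plan $T'$ with $P_{T'}\ge P_T$ in which each of these $r_1$ nodes holds exactly the experts $\{1,\ldots,c\}$. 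The construction iterates local swaps: whenever some node $v\in\{1,\ldots,r_1\}$ hosts an expert $k>c$ and some node $w>r_1$ hosts an expert $j\in\{2,\ldots,c\}$ that is missing from $v$, swap expert $k$ on $v$ with expert $j$ on $w$; this preserves every $r_e$ and every per-node slot count, and such a swap is always available until the goal configuration is reached (because $r_j\ge r_1$ for all $j\in\{2,\ldots,c\}$ by the ascending sort of $r$). Third, after the first $r_1$ nodes are saturated, recovery decomposes as ``group~$1$ survives'' and ``experts $\{c+1,\ldots,E\}$ recover on the remaining $N-r_1$ nodes''; conditioning on the split of $A$ across the two node sets, the latter subproblem is a smaller MRO instance with $g-1$ groups, and the inductive hypothesis closes the argument.

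The main obstacle will be proving that each individual swap in the second phase does not decrease the recovery probability. The swap touches only two nodes $v,w$ and two experts $j,k$, yet the global recovery event couples all experts through the uniformly random alive set $A$. My plan is to condition on $A\setminus\{v,w\}$ and analyze the three cases $|A\cap\{v,w\}|\in\{0,1,2\}$; cases $0$ and $2$ are symmetric under the swap and contribute equal probability before and after, while the critical case $|A\cap\{v,w\}|=1$ should follow from the ordering $r_j\le r_k$ (a consequence of $j\le c<k$ and the ascending sort) by exhibiting an injection from outcomes that break recovery under $T'$ into those that break recovery under $T$. Once the single-swap inequality is established, iterated application drives $T$ to the canonical $T'$, and the induction over $g$ completes the proof.
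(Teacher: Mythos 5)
Your base case and your overall architecture (induction on the number of groups, canonicalize an arbitrary plan, then recurse on the tail experts) are reasonable, but the proof hinges entirely on the single-swap inequality $P_{T'}\ge P_T$, which you defer and which does not follow from the case analysis you sketch. Conditioning on $B=A\setminus\{v,w\}$ and on $|A\cap\{v,w\}|=1$ is not enough: take a slice where $\bigcup_{b\in B}Col_b=[E]\setminus\{j,k\}$ and where node $w$ happened to host \emph{both} $j$ and $k$ under $T$. Before the swap, the outcome ``$w$ alive'' recovers and ``$v$ alive'' does not (one success out of two); after the swap, $v$ holds $j$ but not $k$ and $w$ holds $k$ but not $j$, so \emph{neither} outcome recovers (zero out of two). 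The swap therefore strictly decreases the recovery probability on this conditional slice, and any correct argument must exhibit a global injection (or coupling) across slices that trades these losses against gains elsewhere (e.g., slices missing $j$ together with some expert of $C_v$). That accounting is the entire difficulty of the theorem and is absent from the proposal. There are also two smaller availability gaps in the exchange phase: the paper explicitly allows multiple replicas of one expert on a node, so a node $v$ in the first $r_1$ nodes may be missing expert $j$ without hosting any expert $k>c$ (its slots may be filled by duplicates), and all $r_j$ replicas of $j$ may already sit on the first $r_1$ nodes, in which case no partner $w>r_1$ exists and your swap is undefined.

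For contrast, the paper avoids exchange arguments on full plans altogether. It inducts on the number of experts: it reorders columns so that those containing the rarest expert come first, conditions on the index $t$ of the first alive such column, and bounds the conditional probability of covering $M\setminus S_t$ with the remaining $R-1$ samples by the inductive optimum $P_s(M\setminus S_t, N-t, R-1)$. The only combinatorial optimization it then needs is Proposition~1 --- that taking $S_t=\textsf{Min}_c M$ maximizes $P_s(M\setminus S_t,\cdot,\cdot)$ --- which is proved by comparing segment lengths of the explicit $P_s$ formula rather than by perturbing arbitrary placements. If you want to salvage your route, you would need to either prove the aggregate swap inequality (handling duplicates and the availability issues), or replace the exchange phase with a conditioning step of the paper's type.
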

\begin{proof}
We first consider the simple case of $E\le c$. 

Under this case, if $N\le r_1+R-1$, by Pigeonhole principle, apparently we have $\mathsf{Pr}(\bigcup_{a \in A}Col_a=[E]) =1$ for any MRO plan.

Otherwise $N\le r_1+R-1$, then $|S_1|=r_1$. For any placement plan $T$, the probability of recovering all experts is upper bounded by the probability of recovering expert 1:\\
\begin{align}
\begin{aligned}
    & \mathsf{Pr}(\bigcup_{a \in A}Col_a=[E]) \le \mathsf{Pr}(1\in \bigcup_{a \in A}Col_a)  \\
\end{aligned}
\end{align}

For any placement plan $T$, the probability of recovering expert 1 satisfies:\\
\begin{align}
\label{eq:exp1}
\begin{aligned}
    & \mathsf{Pr}(1\in \bigcup_{a \in A}Col_a) \le 1-\frac{\tbinom{N-r_1}{R} }{\tbinom{N}{R} }  \\
\end{aligned}
\end{align}
For any MRO plan, by definition, we have:\\
\begin{align}
\begin{aligned}
    & \{1,\cdots, E\} \subseteq Col_j, j\in S_1  \\
\end{aligned}
\end{align}
Therefore,\\
\begin{align}
\label{eq:mlowc_eq}
\begin{aligned}
    & \mathsf{Pr}(1\in \bigcup_{a \in A}Col_a) \ge \mathsf{Pr}(\bigcup_{a \in A}Col_a=[E]) \ge 1-\frac{\tbinom{N-r_1}{R} }{\tbinom{N}{R} }  \\
\end{aligned}
\end{align}
Combining Inequality~\ref{eq:exp1} and Inequality~\ref{eq:mlowc_eq}, we have:
for $E\le c$, any MRO plan maximizes $\mathsf{Pr}(\bigcup_{a \in A}Col_a=[E])$ and thus is optimal.

To prove the case of $E>c $, we first define two functions $P_T(\cdot,\cdot,\cdot)$ and $P_s(\cdot,\cdot,\cdot)$. $P_T$ is defined as:
\begin{align}
\label{eq:pt_def}
\begin{aligned}
    & P_T(M,n,r)= \mathsf{Pr}(\bigcup_{a \in A}Col_a \supseteq M)  \\
\end{aligned}
\end{align}
where matrix $T\in \mathbbm{N}^{c\times n}$, $A$ is $r$ columns randomly sampled from $n$ columns
, $M$ is a subset of $[E]$. $P_T$ is used to illustrate the probability of recovering the subset $M$ from a sub-matrix $T$.

 For set $M$, we define $M[j]$ as $j$-th smallest element in set $M$. $P_s$ is defined as:
\begin{align}
\begin{aligned}
    & P_s(M,n,r)= \mathsf{Pr}(r \text{ samples cover the first } \lceil \frac{ |M| }{c} \rceil   \text{ segments of vector } v)  \\
\end{aligned}
\end{align}
where vector $v$ has length $n$, with consecutively $\lceil \frac{ |M| }{c} \rceil   $ segments, the $i$-th segment has length $L_{M,i}=r_{M[1+(i-1)*c]}$, $i=1,\cdots,\lceil \frac{ |M| }{c} \rceil -1$, $L_{M,\lceil \frac{ |M| }{c} \rceil }=\min \{n-\sum_{j=1}^{\lceil \frac{ |M| }{c} \rceil -1} L_{M,j},r_{\lceil \frac{ |M| }{c} \rceil }\}$. $P_s$ is defined to illustrate the recover probability of MRO plans.

We prove the optimality of MRO plan when $E>c$ by mathematical induction. We first have the following assumption:
\begin{assumption}
\label{ass:1}
$\forall m'<E, \forall n', r'$,  $\forall \text{ set }  M', |M'|=m'$,\\
\begin{align}
\label{eq:assumption1}
\begin{aligned}
    & \max_{T} P_T(M',n',r')=P_s(M',n',r')  \\
\end{aligned}
\end{align}
\end{assumption}

We want to prove that for $\forall |M|=E, \forall N, R$, 
\begin{align}
\label{eq:target}
\begin{aligned}
\max_{T} P_T(M,N,R)=P_s(M,N,R) \\
\end{aligned}
\end{align}

Proving Equation~\ref{eq:target} indicates that any MRO plan achieves optimal recover probability across all different $T$.

We first consider the case of $|M|>c$.
First if $R=1, |M|>c$, for $\forall T$, $ P_T(M,N,R)=0, P_s(M,N,R)=0$, the claim trivially satisfies.

When $R>1, |M|>c$, for $\forall T$, we can transform $T$ to $T'$ by reordering the columns to let the columns containing 1 be the first consecutive columns. And $\forall T$ we have:\\
\begin{align}
\begin{aligned}
    &  P_T(M,N,R)=P_{T'}(M,N,R)  \\
\end{aligned}
\end{align}

Let $A'$ as the set of $R$ columns randomly sampled on $T'$, $S_{t}$ be the set of different values of column $t$ of matrix $T'$, $C$ is the largest column ID of $T'$ that contains 1. By conditioning on $t$, we have:\\
\begin{small}
    \begin{align}
\label{eq:PT_expansion}
\begin{aligned}
    &  P_{T'}(M,N,R)=\sum_{t=1}^{C} \textsf{Pr}(\min A'=t) \textsf{Pr}(\bigcup_{a \in A'\setminus \{t\} }Col_a \supseteq M\setminus S_t|\min A'=t)  \\
\end{aligned}
\end{align}
\end{small}

If we consider $T''$ as the sub-table of $T'$ composed of its last $N-t$ rows, we have:\\
\begin{small}
\begin{align}
\begin{aligned}
    &  \textsf{Pr}(\bigcup_{a \in A'\setminus \{t\} }Col_a \supseteq M\setminus S_t|\min A'=t) \le \max_{T''} P_{T''}(M\setminus S_t, N-t, R-1) \\
\end{aligned}
\end{align}
\end{small}

By Assumption~\ref{ass:1}, due to $S_t \neq \emptyset$, we have:\\
\begin{align}
\begin{aligned}
    &  \max_{T''} P_{T''}(M\setminus S_t, N-t, R-1)=P_s(M\setminus S_t,N-t,R-1) \\
\end{aligned}
\end{align}
Recall Equation~\ref{eq:PT_expansion}, we have:\\
\begin{align}
\label{eq:ps_form}
\begin{aligned}
    &  P_{T'}(M,N,R)\le \sum_{t=1}^{r_{M[1]}} \textsf{Pr}(\min A'=t) P_s(M\setminus S_t,N-t,R-1)  \\
\end{aligned}
\end{align}
To upper bound $P_{T'}(M,N,R)$, we have to upper bound $ P_s(M\setminus S_t,N-t,R-1)$.
We first prove the following proposition:
\begin{proposition}
\label{prop:1}
   Denote $\textsf{Min}_c M$ as the smallest $c$ elements of $M$. For $\forall M$, we have:
   \begin{align}
\begin{aligned}
    & \textsf{Min}_c M= \arg\max_{S_t} P_s(M\setminus S_t,N-t,R-1)  \\
\end{aligned}
\end{align}

\end{proposition}

It is apparent that removing elements from the recover target set  results in an increase of $P_s$. Therefore, if $|S_t|< c, \forall s \neq S_t$,\\
\begin{align}
\begin{aligned}
    &  P_s(M\setminus (S_t \cup s),N-t,R-1) \ge  P_s(M\setminus S_t,N-t,R-1)  \\
\end{aligned}
\end{align}
Therefore the set $S_t$ that maximizes $P_s(M\setminus S_t,N-t,R-1)$ must have $c$ cardinality.

Consider $|S_t|=c$. If $S_t$ is not the smallest $c$ elements of $M$, we substitute an element in $S_t$ with a smaller element obtaining $S'_t$, $|S'_t|=c$. By the property of rankings, we have,\\
\begin{align}
\begin{aligned}
    &  L_{M \setminus S'_t,i}\ge L_{M \setminus S_t,i}, \forall i \\
\end{aligned}
\end{align}
Therefore, $\forall S'_t$ obtained by this way, \\
\begin{align}
\begin{aligned}
    &  P_s(M\setminus S'_t,N-t,R-1) \ge  P_s(M\setminus S_t,N-t,R-1)  \\
\end{aligned}
\end{align}
We recursively apply this substitution and obtains $\textsf{Min}_c M$, therefore, for $\forall S_t$, we have:\\
\begin{align}
\begin{aligned}
    &  P_s(M\setminus\textsf{Min}_c M ,N-t,R-1) \ge  P_s(M\setminus S_t,N-t,R-1)  \\
\end{aligned}
\end{align}
Thus finishes the proof of the proposition. This proposition tells us that $S_t=\textsf{Min}_c M$ maximizes $P_s(M\setminus S_t,N-t,R-1)$.

By Equation~\ref{eq:ps_form} and Proposition~\ref{prop:1}, we have,\\
\begin{align}
\label{eq:relax_st_to_minc}
\begin{aligned}
    &  P_{T}(M,N,R)\le \sum_{t=1}^{r_{M[1]}} \textsf{Pr}(\min A'=t) P_s(M\setminus \textsf{Min}_c M,N-t,R-1)  \\
\end{aligned}
\end{align}

For $P_s(M,N,R)$, consider the left most sample should fall on  the first segment, and the other $R-1$ samples should cover the set $M'$, where $M'$ satisfies the $j$-th segment of $M'$ has equal length with the $j+1$-th segment of $M$ for $\forall j$. Therefore  $M'=\{M[1+c],\cdots,M[|M|]\}$.\\
\begin{small}
    \begin{align}
\label{eq:relax_ps}
\begin{aligned}
  P_s(M,N,R)&= \sum_{t=1}^{r_{M[1]}} \textsf{Pr}(\min A'=t) P_s(M' ,N-t,R-1)\\
 &= \sum_{t=1}^{r_{M[1]}} \textsf{Pr}(\min A'=t) P_s( \{M[1+c],\cdots,M[|M|]\},N-t,R-1)  \\
    &= \sum_{t=1}^{r_{M[1]}} \textsf{Pr}(\min A'=t) P_s(M\setminus \textsf{Min}_c M,N-t,R-1)  \\
\end{aligned}
\end{align}
\end{small}

Substituting Equation~\ref{eq:relax_ps} into  Inequality~\ref{eq:relax_st_to_minc}, we have:\\
\begin{align}
\label{eq:upper_bound}
\begin{aligned}
    &  P_{T}(M,N,R)\le P_s(M,N,R)  \\
\end{aligned}
\end{align}

Now we have proven that $P_s$ is an upper bound of $P_T$. Next, we prove that if $T$ is a MRO plan, Inequality~\ref{eq:upper_bound} can actually achieve equal. For $\forall $ MRO plan $T^*$, we have:\\
\begin{align}
\label{eq:mro_ps}
\begin{aligned}
\bigcup_{a \in A}Col_a=[E] \iff A \text{ covers } S_i, \forall i\in \{1,\cdots,\lceil\frac{E}{c} \rceil\}
\end{aligned}
\end{align}

For $\forall $ MRO plan $T^*$, we can reorder the columns so that for each column set $S_i$, all  columns in $S_i$ are  consecutive. We denote the reordered MRO plan as $T'$, and the randomly sampled columns on $T'$ as $A'$.
\begin{align}
\label{eq:reorder_}
\begin{aligned}
&\textsf{Pr}(\bigcup_{a \in A'}Col_a=[E])\\
=&\textsf{Pr}(A' \text{ covers segment with length } |S_i|, \forall i\in \{1,\cdots,\lceil\frac{m}{c} \rceil\})\\
=& P_s(M,N,R)\\
\end{aligned}
\end{align}

Therefore for $T^*$ which is a MRO plan, by the definition of $P_T$ in Equation~\ref{eq:pt_def}, we have:\\
\begin{align}
\label{eq:reorder}
\begin{aligned}
P_{T^*}(M,N,R)= P_s(M,N,R)\\
\end{aligned}
\end{align}

Equation~\ref{eq:reorder} indicates that $\exists \text{MRO plan~} T^*, P_{T^*}(M,N,R)= P_s(M,N,R)$, hence we prove that, under Assumption~\ref{ass:1}, Equation~\ref{eq:target} holds when $E>c$.\\

Assumption~\ref{ass:1} trivially holds due to the optimality of MRO plan when $E\le c$.

By mathematical reduction, for $\forall E, \forall |M|=E, \forall N,R$, we have,\\
\begin{align}
\label{eq:final}
\begin{aligned}
\max_{T} P_T(M,N,R)=P_s(M,N,R) \\
\end{aligned}
\end{align}

Furthermore, for $\forall$ MRO plan $T^*$ we have:\\
\begin{align}
    \begin{aligned}
        P_{T^*}([E],N,R) = \max_T \mathsf{Pr}(\bigcup_{a \in A}Col_a = [E])
    \end{aligned}
\end{align}

\end{proof}

\end{document}